\newcommand{\removelatexerror}{\let\@latex@error\@gobble}
\definecolor{light-gray}{gray}{0.90}
\newtheorem{lem}{Lemma}
\theoremstyle{remark}
\newtheorem*{model}{Attacker Model}
\newcommand{\vf}{\mbox{\it vf}}
\newcommand{\buf}{\mbox{\it buf}}
\newcommand{\KGen}{\mathsf{KGen}}
\newcommand{\INITSTORE}{\mathsf{INIT.STORE}}
\newcommand{\INITOC}{\mathsf{INIT.OC}}
\newcommand{\ACCESSRW}{\mathsf{ACCESSRW}}
\newcommand{\ACCESSOC}{\mathsf{ACCESSOC}}
\newcommand{\Exp}{\mathrm{Exp}}
\newcommand{\inb}{\mathrm{inb}}
\newcommand{\sel}{\mathrm{sel}}
\newcommand{\out}{\mathrm{out}}
\newcommand{\adv}{\mathcal{A}}
\newcommand{\cdv}{\mathcal{C}}
\newcommand{\prob}[1]{\mathsf{Pr\Big[ #1 \Big]}}
\newcommand{\schemeacr}{CAOS}
\newcommand{\schemefull}{Concurrent-Access Obfuscated Store}
\newcommand{\schemesymb}{\mathcal{O}}
\renewcommand\footnotetextcopyrightpermission[1]{}
\begin{document}
\title{\schemeacr{}: \schemefull{}}

\author{Mihai Ordean}
\affiliation{%
  \institution{University of Birmingham, UK}
}
\email{m.ordean@cs.bham.ac.uk}

\author{Mark Ryan}
\affiliation{%
  \institution{University of Birmingham, UK}
}
\email{m.d.ryan@cs.bham.ac.uk}

\author{David Galindo}
\affiliation{%
  \institution{University of Birmingham and\\Fetch.AI, Cambridge, UK}
}
\email{d.galindo@cs.bham.ac.uk}

\begin{abstract}
This paper proposes \schemefull{} (\schemeacr{}), a construction for remote data storage that provides access-pattern obfuscation in a honest-but-curious adversarial model, while allowing for low bandwidth overhead and client storage. Compared to other approaches, the main advantage of \schemeacr{} is that it supports concurrent access without a proxy, for multiple read-only clients and a single read-write client. Concurrent access is achieved by letting clients maintain independent maps that describe how the data is stored. Even though the maps might diverge from client to client, the protocol guarantees that clients will always have access to the data. Efficiency and concurrency are achieved at the expense of perfect obfuscation: in \schemeacr{} the extent to which access patterns are hidden is determined by the resources allocated to its built-in obfuscation mechanism. To assess this trade-off we provide both a security and a performance analysis of \schemeacr{}. We additionally provide a proof-of-concept implementation\footnote{Available: \url{https://github.com/meehien/caos}}.
\end{abstract}

\begin{CCSXML}
<ccs2012>
<concept>
<concept_id>10002978.10003018.10003020</concept_id>
<concept_desc>Security and privacy~Management and querying of encrypted data</concept_desc>
<concept_significance>500</concept_significance>
</concept>
<concept>
<concept_id>10002978.10003006.10011747</concept_id>
<concept_desc>Security and privacy~File system security</concept_desc>
<concept_significance>300</concept_significance>
</concept>
</ccs2012>
\end{CCSXML}

\ccsdesc[500]{Security and privacy~Management and querying of encrypted data}
\ccsdesc[300]{Security and privacy~File system security}

\keywords{data obfuscation; concurrent-access obfuscated store; access pattern}

\maketitle

\section{Introduction}


Cloud computing has become an attractive solution for data storage. Unfortunately, current cloud computing architectures do not provide sufficient and reliable security for private and sensitive data. Even when encryption is used, malicious servers and operators can learn user access patterns and derive information based on them (e.g., data accessed more often can be assumed to be more important) \cite{cloud_security_alliance}.

One cryptographic primitive specifically designed to hide access patterns is Oblivious RAM (ORAM). This primitive was introduced by Goldreich and Ostrovsky \cite{goldreich1987towards, goldreich1996software} for the purposes of preventing software reverse engineering by hiding a program's access patterns to memory. The issue has since become important in the context of cloud computing, where clients and data-store servers often reside in different trust domains and trust between them cannot always be established. Modern ORAM schemes \cite{stefanov2013path, ren2015constants, devadas2016onion} are seen as viable options of addressing this problem. However, in real-life scenarios, even the best ORAMs can prove to be impractical \cite{senny2014blog}, mainly because of the high bandwidth requirements and/or client storage constraints. Another major limitation of modern ORAM constructions is that they are mainly restricted to having a single-client that connects to the  data-store server. This is because data in the store is accessed through a client maintained local structure (i.e. a map). Migrating from this model has proven difficult. Even small deviations \cite{williams2012privatefs, stefanov2013oblivistore}, such as allowing multiple clients to access the store through a proxy that acts as the single-client have been shown to have vulnerabilities \cite{sahin2016taostore}.

As ORAMs have been difficult to use in real-life, other specialised, and more efficient security primitives have been developed in the context of privacy preserving access to cloud-stored data e.g. searchable encryption (SE) schemes. SE uses either symmetric keys \cite{curtmola_searchable_2006,david_cash_highly-scalable_2013,seny_kamara_parallel_2013} or public keys \cite{shi2007multi} and allows clients to securely search cloud stored databases through precomputed ciphertexts called \textit{trapdoors}. SE schemes have low computational requirements from clients and are bandwidth efficient. However, prior work has shown that searchable encryption schemes leak significant amounts of information about their encrypted indexes when using attacks which combine access-pattern analysis, background information about data stored, and language-based word frequency knowledge \cite{islam2012access}. Incorporating changes and updates to the searched database is also a difficult process. Often schemes require the whole index to be regenerated for any the new information added \cite{curtmola_searchable_2006,david_cash_highly-scalable_2013}. Finally, SE schemes are restricted to search operations, actual data retrieval needs to happen through a private information retrieval protocol \cite{chor1995private} or an ORAM.

As such, an ideal system would have the general applicability and access-pattern privacy of ORAM (cloud storage which hides access patterns), and bandwidth efficiency and concurrent access capabilities similar to those of SE schemes. In this paper we take steps towards this direction by proposing a new design for a general-purposed secure storage with concurrency and bandwidth efficiency. However, the privacy guarantees we provide are not absolute. Instead, our protocol requires that users provision resources for access-pattern obfuscation, and the security guarantees depend on how much of these resources are available.

\subsection{Contributions}
This paper proposes \schemefull{} (\schemeacr{}), a  storage 
access protocol that can hide data access frequency and access patterns, while allowing for concurrent data access. Our main focus when designing \schemeacr{} is to obtain a bandwidth-efficient protocol that supports concurrency by design and that is able to provide a customizable amount of data and access-pattern privacy. Our main contributions are as follows:

\begin{enumerate}
\item \textbf{Obfuscated access patterns.} We propose a secure access protocol for remote data storage which is able to hide access patterns. Our construction requires at least one of each of the following two types of clients: a \textit{regular client} which stores data, and an \textit{obfuscation client} which hides client's access patterns. Maximum privacy is achieved as long as at least one obfuscation client behaves honestly.

\item \textbf{Concurrent access.} We provide, to our knowledge, the first concurrent-access protocol with access-pattern hiding properties that does not require a trusted third party (e.g. proxy). Our concurrent access protocol is applicable in scenarios with multiple readers, but can cope with having a single writer.

\item \textbf{Small and constant bandwidth.} For all clients with read-write/read-only access, our protocol requires a constant bandwidth that is independent of the size of the store. This is possible because we separate the regular access clients and the security responsible clients (i.e. the obfuscation clients). The bandwidth requirements for interacting with the store are also small. In our current instantiation a single block of data requires a constant two blocks to be transferred.

\item \textbf{Security and performance analysis.} We give a game-based definition of data and access-pattern privacy for \schemeacr{}-like protocols against honest-but-curious storage servers. Furthermore, we apply this new definition to \schemeacr{} and prove it secure. Last but not least we report on the theoretical and observed performance of our protocol thanks to our  proof-of-concept implementation.
\end{enumerate}

\begin{figure*}[t]
	\captionsetup[subfigure]{labelformat=empty}
	\begin{center}
		\begin{subfigure}[t]{.47\linewidth}
			\centering
			\begin{tikzpicture}[every node/.style={node distance=0pt,rectangle,draw,minimum height=20pt,minimum width=30pt,inner sep=5pt}, scale=0.8, transform shape]
			\node [draw=none, text width=20pt, inner sep=0pt] (storeLabel) {\textbf{Store}};		
			\node [right=5pt of storeLabel, fill=light-gray] (b1) {$block_1$};
			\node [right=of b1] (b2) {$block_2$};
			\node [right=of b2, fill=light-gray] (b3) {$block_1$};
			\node [right=of b3] (bd) {$\ldots$};
			\node [right=of bd] (bn) {$block_n$};

			\node [draw=none, above=10pt of b1,minimum height=10pt, inner sep=1pt] (p1) {$p_1$};
			\node [draw=none, above=10pt of b2,minimum height=10pt, inner sep=1pt] (p2) {$p_2$};
			\node [draw=none, above=10pt of b3,minimum height=10pt, inner sep=1pt] (p3) {$p_3$};
			\node [draw=none, above=10pt of bn,minimum height=10pt, inner sep=1pt] (pn) {$p_N$};
			
			\draw[-latex] (p1.south) -- (b1.north);
			\draw[-latex] (p2.south) -- (b2.north);
			\draw[-latex] (p3.south) -- (b3.north);
			\draw[-latex] (pn.south) -- (bn.north);
			
			\node [draw=none, below=15pt of storeLabel.south west, anchor=north west] (blockLabel) {\textbf{Block}};
			\node [right=25pt of blockLabel] (bid) {bid};
			\node [right=of bid] (cns) {cns};
			\node [right=of cns] (ts) {ts};
			\node [right=of ts] (data) {data};
			
			\draw (b3.south west) -- (bid.north west);
			\draw (b3.south east) -- (data.north east);
			
			\node [draw=none, below=50pt of storeLabel] (spacing) {};
			\end{tikzpicture}
			\caption{\small{Figure \ref{fig:serverdata}: Server data structures in \schemeacr{}. The server redundantly stores $n$ equally-sized encrypted blocks at $N$ memory locations, $n<N$. The memory locations are addressable through unique positions ids $p_1,\dots,p_N$. Each $block$ stored contains the data intended for storage (i.e. $block.data$) and a small amount of metadata (i.e. $block.bid$, $block.ts$ and $block.cns$) that helps with map synchronization between concurrent clients.}}
			\label{fig:serverdata}
		\end{subfigure}
		\hfill
		\begin{subfigure}[t]{.47\linewidth}
			\centering
			\begin{tikzpicture}[every node/.style={node distance=0pt,rectangle,draw,minimum height=20pt,minimum width=30pt,inner sep=5pt}, scale=0.8, transform shape]
			\node [draw=none] (mapLabel) {\textbf{Map}};
			\node [right=20pt of mapLabel] (bid1) {$bid_1$};
			\node [right=of bid1] (bid2) {$bid_2$};
			\node [right=of bid2] (bidd) {$\ldots$};
			\node [right=of bidd] (bidn) {$bid_n$};
			
			\node [draw=none, below=15pt of mapLabel.south west, anchor=north west] (bidLabel) {\textbf{BlockID}};
			\node [right=15pt of bidLabel] (psns) {psns};
			\node [right=of psns] (ts_map) {ts};
			\node [right=of ts_map] (vf) {vf};
			
			\draw (bid2.south west) -- (psns.north west);
			\draw (bid2.south east) -- (vf.north east);

			\node [below right=15pt and -40pt of bidLabel] (pa) {$p_a$};
			\node [right=of pa] (pb) {$p_b$};
			\node [right=of pb] (pd) {$\ldots$};
			\node [draw=none, below=of pb] (psnsLabel) {\textbf{Positions}};
			
			\draw (psns.south west) -- (pa.north west);
			\draw (psns.south east) -- (pd.north east);
					
			\node [right=20pt of pd] (vpa) {$p_x$};
			\node [right=of vpa] (vpd) {$\ldots$};
			\node [draw=none, below right=0 and -43pt of vpa] (vfLabel) {\textbf{Verified positions}};

			\draw (vf.south west) -- (vpa.north west);
			\draw (vf.south east) -- (vpd.north east);
			
			\node [draw=none, below right=25pt and 140pt of mapLabel] (spacing) {};
			\end{tikzpicture}
			\caption{Figure \ref{fig:clientdata}: Client local data structures in \schemeacr{}. The client stores a linear map indexed by block ids (i.e. $bid$). For block id $bid$ the client stores a list of server positions (i.e. $bid.psns$) from where the block can be retrieved. The map also keeps some metadata about each block id (i.e. $bid.ts$ and $bid.\vf$) that helps with synchronisation between concurrent clients.}
			\label{fig:clientdata}
		\end{subfigure}
	\end{center}
\end{figure*}
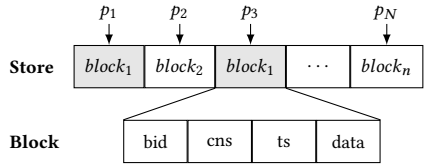

\section{\schemefull{} (\schemeacr{})}
\label{primitive}

\schemeacr{} is a protocol for storing data securely by encrypting it and anonymizing (read or write) access patterns. CAOS allows users to trade storage space and security for concurrency and bandwidth efficiency.

\smallskip\noindent\textit{Data elements.} 
In \schemeacr{} data is partitioned into blocks of equal size. Each block of data is uniquely identified by a client using a \textit{block id (bid)}. Storing a block remotely involves encrypting the contents of the block and then placing the resulting ciphertext at a random location in the store's memory. We refer to these locations at the store's memory as \textit{positions}. The size of the store is measured in the number of positions it has available for storing blocks. Clients can store the same block at multiple positions and keep track of where their data is located by maintaining a \textit{map} which links block ids to positions. 

\smallskip\noindent\textit{Clients.} There are two types of clients in \schemeacr: regular clients (RC), which may be read-write (RW) or read-only (RO);  and obfuscation clients (OC). Both RC and OC access the store directly and independently from each other. 


Regular clients are the main users of the store. They have low bandwidth and local storage requirements, as they only have to store a map. However, accesses done by these clients do leak information about access patterns. 
Obfuscation clients are the clients that provide security. These clients are able to provide access-pattern obfuscation for themselves as well as RCs. For that purpose OCs use of a buffer which is stored locally in addition to a map. The size of the OC's local buffer and the OC's bandwidth requirements are proportional to the speed of obfuscation. 

\smallskip\noindent\textit{Access-pattern obfuscation.} Our definitions derive from existing access-pattern security definitions in ORAM \cite{stefanov2011towards}. Intuitively, the ORAM definitions require that no information should leak with regards to: (1) which data is being accessed, (2) the frequency of accesses, (3) the relation between accesses, (4) whether access is read or write, and (5) the age of the data. ORAM constructions maintain invariants to ensure that no information is leaked regardless of how many times the store is accessed. \schemeacr{} maintains the requirement that no information is leaked for cases (1)-(5), but does not provide guarantees for each individual access operation. Instead, \schemeacr{} provides security guarantees for \textit{access sequences} that involve both regular clients and obfuscation clients. Our security definitions for content and access-pattern security in \schemeacr{} are detailed in Section~\ref{sec:secanalysis}.

\smallskip\noindent\textit{Concurrency.} \schemeacr{} allows multiple clients to access the store simultaneously and independently from each other.
Achieving concurrency in \schemeacr{} is not a trivial task. This is because each client's access operation randomly changes the contents of store, and these changes are only stored locally to that client.
Thus, \schemeacr{} needs to address two problems: (1) to synchronise locally stored client maps in an efficient manner, and (2) how to allow multiple clients to  change the store simultaneously and in a way that does not result in data loss for other clients.

\smallskip\noindent\textit{Syntax.} In the following we draw on the above and give the syntax of our \schemeacr{} protocol. Alternative variants of \schemeacr{} are possible if adhering to this syntax. We say that an $(n, N)$ store $S$ is a collection of $n$ data blocks written to $N$ store positions such that $n<N$.

\begin{definition}
\emph{\schemeacr{}} consists of a tuple of five PT algorithms $\schemesymb{}=\allowbreak (\KGen,\allowbreak\INITSTORE,\allowbreak\INITOC,\allowbreak\ACCESSRW,\allowbreak\ACCESSOC)$ over an $(n, N)$ store $S$: 
	
	\begin{description}
		\item[$k \leftarrow \KGen(1^\lambda):$] is a setup probabilistic  algorithm run by the RW client. It takes as input the security parameter $\lambda$ and outputs a secret key $k$.
		
		\item[$S \leftarrow \INITSTORE(DB,N,k):$] is a deterministic algorithm run by the RW client to initialize the data store. It takes as input a database $DB=(B_0,\ldots,B_{n-1})$ of $n$ data blocks, encrypts each block under key $k$, and distributes them between the total number $N$ of store positions.
		
		\item[$\buf, S \leftarrow \INITOC(S, k):$] is a deterministic algorithm run by the OC to initialize itself. It requires access to an initialized store $S$ and its encryption key $k$ and creates the internal buffer of the obfuscation client.
		
		\item[$ret, S \leftarrow \ACCESSRW(B, op, d, S, k):$] is a probabilistic algorithm that RCs run to access a store. It takes as input the bid $B$ to be accessed, the operation $op \in \{read,write\}$, the data $d$ to be written if $op=write$, and the store $S$ and its key $k$. When the client runs this algorithm, some positions on the server are read, and others are written. It returns the block read or an acknowledgement for the write operation, and the new state of the store $S$.
	
		\item[$\buf, S \leftarrow \ACCESSOC(\buf, S, k):$] is a probabilistic algorithm run by the OC to access a store. It takes as input a local data structure $\buf$ that acts as a buffer, a store $S$ and a key $k$. The algorithm alters the OC's buffer of the obfuscation client. Additionally, when the obfuscation client runs this algorithm, some positions in the store are read, while others are written. This changes the mapping between blocks and positions.
	\end{description}
\end{definition}

\section{Efficient access-pattern obfuscation in \schemeacr{}}
\label{CSW}

This section describes \schemeacr{}. We begin with an overview of the protocol and we will follow up with details about the corresponding algorithms and discussing a proof of concept implementation. The complete source-code is available at \cite{source_code}.

\subsection{Overview}


\smallskip\noindent\textit{Access pattern obfuscation.} In \schemeacr{} we achieve access-pattern obfuscation for sequences of access operations (see Section \ref{sec:secanalysis}). This is a weaker security guarantee than that used in other works \cite{goldreich1987towards, goldreich1996software, stefanov2011towards}, where obfuscation is achieved for each single access operation. In return, our construction allows for concurrency and is more practical.

In \schemeacr{} hiding the type of access (read or write) and the age of the data is done by joining both the read and the write operations into a single access function, $\ACCESSRW$. This prevents the adversary from learning when data is read or written, and when new data is added to the store, with the exception of the initial provisioning of the store done by running $\INITSTORE$.

\schemeacr{} uses a locally stored $map$ per client to keep track of which store positions contain which blocks. By setting the size of the store to be larger than the size of the data to be secured, the algorithm $\ACCESSRW$ can create redundancies through re-encrypting and duplicating blocks from the store and assign them to random free-positions. We use the term \textit{free-positions} to refer both to store positions which have never been written, and to positions whose corresponding blocks have at least one redundancy (i.e. blocks that are stored in two or more places). By allowing regular clients (RCs) to access the same data from multiple store positions we are able to partially obfuscate details about the frequency with which specific data is being accessed, and about the relationship between subsequent accesses. We say partially because, even though data is duplicated to random positions, the adversary can still connect these positions to the initial position from where the duplication process began. To address this issue we use obfuscation clients (OCs). These are read-only clients that use the $\ACCESSOC$ function to access the store similarly to RCs. The difference is that OCs maintain a local buffer which is used to store the contents of the positions received from the store. When an OC performs a store access, it writes (i.e. duplicates) in the store a block read from its buffer. As such, blocks that are duplicated by OCs are not linked to current store blocks and do not leak any access-pattern information.


\smallskip\noindent\textit{Data structures.} 
In \schemeacr{} each store block contains the following: data to be stored \textit{block.data}, a block identifier \textit{block.bid}, a consolidation field \textit{block.cns} that indicates the number of clients that know that a block is stored at a position, and a timestamp \textit{block.ts} of when the data was last changed (cf. Fig \ref{fig:serverdata}).

Client local maps are indexed by the block identifier \textit{bid}, and contain the following: \textit{bid.psns}  enumerates the positions in the store from which the block \textit{bid} is available, \textit{bid.ts} stores most up-to-date timestamp observed, and \textit{bid.\vf} stores positions \textit{p} observed by the map holder (i.e. client) to have \textit{block.cns=$|Clients|$}, where $Clients$ is the set of all clients engaged in our protocol (cf. Fig \ref{fig:clientdata}). 

\smallskip\noindent\textit{Concurrency.} In \schemeacr{} access-pattern obfuscation is achieved through shuffling, thus achieving efficient concurrency with direct access for all clients represents a significant challenge. This is especially difficult because in \schemeacr{} each client maintains its own map and syncs it with the store \textit{independently} from other clients during $\ACCESSRW$ or $\ACCESSOC$ operations. In order to prevent data loss, i.e. that a client looses track of the current data in the store, we ensure that ``for each data block, there exists a valid position that is known to all clients''. Maintaining this invariant has lead to two design constraints: (1) we require that for each single block accessed two positions are read and two positions are written on the server store, and (2) \schemeacr{} can only handle a single read-write client that works concurrently with other read-only and/or obfuscation clients. These restrictions are further discussed in Section \ref{correctness}.

Shared knowledge between clients is tracked using \textit{block.cns} and \textit{bid.\vf}. All clients start from the same version of the map, which is afterwards maintained independently by each one. The protocol requires clients to signal each other when they perform changes to their local maps (i.e. when reassigning a position or when changing the data in a block). Because the client only has access to one position per block during an access operation, the change produced by the client will be localised to that particular position in the store. The problem is that without any additional signalling other clients who are not aware of the change have no way of assessing whether the block stored at a specific position is the correct one (as indicated by their map). 

We indicate shared knowledge about a position as follows.
Whenever a client makes a change to a block,
the value \textit{block.cns} is set to 1, meaning that only one client, the one that made the change is currently aware of the change. 
When other clients access this block they can become aware that a reassignment has taken place by comparing the \textit{block.bid} value stored in the block with the value they were expecting according to their local map. If the values do not match the client will infer that the position used to retrieve the block has been reassigned, and will update their local map accordingly. Similarly, by comparing timestamp data from the local map \textit{map[bid].ts} and from the retrieved block \textit{block.ts} clients can determine if the block's data was updated. 


Once a client becomes aware that a block has been reassigned to a new position $p$ (and has performed changes in its local map) it increments \textit{block.cns} by 1 to signal clients that it is aware of the change. When the \textit{block.cns} value is equal to the number of clients then all the clients can safely assume they have the same view about what block is stored at position $p$. 

\smallskip

Next we continue by specifying \schemeacr{} algorithms.

\subsection {Read-write (and read-only) client access}
\label{access-rw}

\begin{figure}[t]
	\begin{center}
		\begin{tikzpicture}[every node/.style={node distance=0pt,rectangle,draw,minimum height=32pt,minimum width=30pt,inner sep=5pt,align=center}, scale=0.8, transform shape]

		\node (read) {Select\\Position};
		\node [right=7pt of read] (sync) {Sync};
		\node [right=7pt of sync] (write) {Prepare\\Write};
		\node [right=7pt of write] (duplicate) {Duplicate\\Block};
			
		\coordinate[above=10pt of read] (readc);
		\coordinate[above=10pt of sync] (syncc);
		\coordinate[above=10pt of write] (writec);
		\coordinate[above=10pt of duplicate] (duplicatec);
		\draw[latex-] (read.north) -- (readc);
		\draw[latex-] (sync.north) -- (syncc);
		\draw[latex-] (write.north) -- (writec);
		\draw[latex-] (duplicate.north) -- (duplicatec);
		
		\node [minimum height=20pt,anchor=center, above right=20pt and -60pt of write](access) {$\ACCESSRW$};
		\coordinate[below=10pt of access] (accessc);
		\draw (access.south) -- (accessc);
		
		\draw (accessc) -- (readc);
		\draw (accessc) -- (syncc);
		\draw (accessc) -- (writec);
		\draw (accessc) -- (duplicatec);
			
		\end{tikzpicture}
		\caption{\small{Client read-write access function. The function $\ACCESSRW$ performs four actions: (1) selects the position used to retrieve a block from the store, (2) synchronises local map with metadata from retrieved blocks, (3) prepares a block to be written back by ensuring that the write operation is possible, and (4) attempts to duplicate one of the retrieved blocks onto the position of the other.}}
		\label{fig:accessmethod}
	\end{center}
	\vspace{-15pt}
\end{figure}
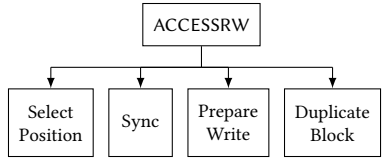

Read-write (RW) clients, such as  email SMTP servers, and read-only (RO) clients, such as email readers, perform  reading and writing through the function $\ACCESSRW$  (cf. Fig \ref{fig:accessmethod}).
We introduce the main function first with calls to several sub-functions that implement required functionality. Each sub-function is subsequently described.

\begin{algorithm2e}[h]
\small
\SetKwProg{function}{function}{}{}
\SetKwProg{rnd}{in random order}{}{}

\KwIn{block id, operation, data (for write operation), store, client $map_c$, store key}
\KwOut{data (for read operation), store, client $map_c$}

\SetKwProg{fn}{function}{}{}

\BlankLine
\function{$\ACCESSRW$ (bid, op, data, S,  $map_c$, k)}{	
	\BlankLine
	$req\_p \gets SelectPosition (bid, map_c)$\; \label{a1l1}
	$cpy\_p \gets SelectPosition (null, map_c)$\; \label{a1l2}
	
	\BlankLine
	$(req\_blk,cpy\_blk) \gets$ READ($req\_p,cpy\_p,S,k$)\; \label{a1l3}
	
	\BlankLine
	$(req\_blk,map_c) \gets Sync(req\_blk, req\_p, map_c)$\; \label{a1l4}
	$(cpy\_blk,map_c) \gets Sync(cpy\_blk, cpy\_p, map_c)$\; \label{a1l5}
	
	\BlankLine
	\If {$(op = READ)$}{\label{a1lif}
		\If {$(req\_blk.bid = bid)$}{
			$out \gets req\_blk.data$\;
		}\label{a1lif1}
	}
	\If {$(op = WRITE)$}{\label{a1lel}
		$(req\_blk,out) \gets PrepareWrite (bid, req\_blk, data)$\;
	}\label{a1life}
	
	\BlankLine
	$(cpy\_blk, map_c) \gets DuplicateBlock(req\_blk, cpy\_blk, cpy\_p, map_c)$\;\label{a1dup}
	
	\BlankLine
	$S\gets$WRITE($S,k,req\_blk, cpy\_blk, req\_p, cpy\_p$)\;\label{a1wrt}
	
	\BlankLine
	\Return $(out, S, map_c)$\;\label{a1ret}
	
}
\caption{Main \schemeacr{} access function.}
\label{algo:csw-main-rw}
\end{algorithm2e}

The main $\ACCESSRW$ function detailed in Algorithm \ref{algo:csw-main-rw} involves the following 8 main steps: 
\begin{enumerate}    
    \item \textit{Select positions} (line \ref{a1l1},\ref{a1l2}). First, select a position (line \ref{a1l1}) to read/write a block as indicated by the input value $bid$. 
    Then select a random position to sync to the local map (line \ref{a1l2}). The block requested via $bid$ will also be copied to the random position if necessary conditions are met (see Algorithm \ref{algo:csw-duplicate}).
    
    The $SelectPosition()$ function interacts with the locally stored map data structure and selects a random position for a given block id $bid$. If no $bid$ is supplied a completely random position is chosen from the set of known store positions. 
    
    \item \textit{Read positions from store} (line \ref{a1l3}). Retrieve the block indicated by $bid$ and the random block using previously selected positions and decrypt them.
  
    \item \textit{Sync} (lines \ref{a1l4},\ref{a1l5}). Synchronize the metadata from the blocks retrieved with the local client map.
    
    \item \textit{Return read data} (lines \ref{a1lif}-\ref{a1lif1}). Prepare to return data if the operation is $READ$ and the block retrieved is correct. If the block retrieved has a different $bid$ than the one requested the client has to re-run $\ACCESSRW$. The protocol guarantees that every client has a valid position for every block.
    
    \item \textit{Replace block data} (lines \ref{a1lel},\ref{a1life}). If the operation is $WRITE$ replace the current data in $req\_blk$ with the input data $data$. 
    
    \item \textit{Duplicate block} (line \ref{a1dup}). Attempt to shuffle store data by coping the $req\_blk$ to position $cpy\_p$, thus increasing the number of positions for $req\_blk$ and reducing the available positions for block $cpy\_blk$ (see Algorithm \ref{algo:csw-duplicate}). 
    
    \item \textit{Write blocks to store} (line \ref{a1wrt}). Encrypt $req\_blk$ and $cpy\_blk$ and write them at positions $req\_p$ and $cpy\_p$.  
    
    \item \textit{Return} (line \ref{a1ret}). Return $out$ as requested data if $op=READ$ and operation was successful; return $null$ otherwise. Return input $data$ as $out$ if $op=WRITE$ and operation was successful ($null$ otherwise). 
\end{enumerate}

\subsubsection {Sync function}
\label{csw-sync}

\begin{algorithm2e}
\small
\SetKwProg{function}{function}{}{}

\KwIn{block, position, client map}
\KwOut{block, client map}

\SetKwProg{fn}{function}{}{}

\BlankLine
\function{Sync (block, p, $map_c$)}{

	\If {$(block.ts < map_c[block.bid].ts)$}{
		remove $p$ from $map_c[block.bid].psns$ and $map_c[block.bid].vf$\;
		$block.bid\gets free$\;
		$block.ts\gets current\_time$\;
		$block.cns \gets 1$\;
	}\Else{
		\If {$(block.cns < \vert clients\vert)$} {
			\If {$(block.ts > map_c[block.bid].ts)$}{
				clear $map_c[block.bid].psns$ and $map_c[block.bid].vf$\;
				set $map_c[block.bid].ts$ to $current\_time$\;
			}
			
			\If {$(p \notin map_c[block.bid].psns)$}{
				move $p$ to $map_c[block.bid].psns$\;
				$block.cns \gets block.cns+1$\;
			}
		}
		\If {$(block.cns = \vert clients\vert)$}{
			add $p$ to $map_c[block.bid].vf$\;
		}
	}
	\Return (block, $map_c$)\;
}

\caption{Sync metadata  between a store block and the local map.}
\label{algo:csw-sync}
\end{algorithm2e}

This function (see Algorithm \ref{algo:csw-sync}) enables \schemeacr{} clients to work concurrently, without having to synchronize their maps to access the store. We do require that once, during the setup phase, clients share the store key and a map from one of the other participating clients. Once a copy of the map has been obtained each client can maintain its own version of the map i.e. $map_c$.

Intuitively this works as follows: after a block has been retrieved from the store, the client needs to establish if local knowledge about the block is correct, i.e. $bid$ of the retrieved block and the position used to retrieve it are correctly linked, and the time-stamp of the block is not older than the time-stamp stored locally. As such, the block can be in multiple states of which only the following require a map modification: (1) old data -- the block is marked as free; (2) wrong position, data up-to-date -- update bid/position association; (3) new data -- update bid/position association and local time-stamp.

Old data in a block is detected by comparing the $block.ts$ value with the locally stored $map_c[block.bid].ts$ value, where $block.bid$ is the bid of the block being synchronised. Old data is thus easy to detect as each $bid$ entry from the map has a single time-stamp (i.e. $map_c[block.bid].ts$): the newest time-stamp observed over all store accesses. In this case the position used to retrieve the block can be used for writing or duplication. 
If new data is detected, the local time-stamp is updated and positions pointing to the old version of the block are removed from $map_c[block.bid].psns$ and $map_c[block.bid].\vf$. They can now be reused for writing or duplication.

If a wrong position is detected then $map_c$ is updated with correct bid-position association (the position thought to point to the requested $bid$ is moved to the correct $map_c[block.bid].psns$). Any modification to a client's local map is communicated to other by incrementing $block.cns$.

\subsubsection{PrepareWrite function}
\label{csw-write}

Writing or retrieving a block to/from the store is not guaranteed to succeed. The PrepareWrite function ensures that all necessary conditions are met before overwriting the data of a block. A block can only be replaced if the id of the block requested matches the id of the block fetched.

In the event that the block fetched is different than the block requested data can still be replaced but only if the data in the fetched block was old.  Replacement of data inside the block is indicated to other clients by resetting the block.cns value to 1 (i.e. only the current client knows about it) and updating the block.ts to current epoch time. We give the pseudo-code for this function in Appendix \ref{app:algorithms}, while Algorithm \ref{algo:csw-write} presents the PrepareWrite function in detail.

\subsubsection {DuplicateBlock}
\label{csw-duplicate}

As previously mentioned, data shuffling is performed through duplication, with two purposes: (1) increasing the number of available positions for blocks accessed more often; (2) reducing the available positions for the less accessed blocks.

The \textit{DuplicateBlock} operation performs the necessary checks and attempts to reassign positions from one block to another while ensuring that clients still have at least one position for each block. 
The difficulty of this operation resides in the fact that each client has a slightly different version of the map and can only become aware of changes after they have happened (often with significant delay).
In order to allow clients to attain a shared view about a position we require that positions do not get reassigned while their block.cns value is below the maximum number of clients. The single exception to this rule is when a client has reassigned a position, decides to reassign it again and no other client has noticed the reassignment. This is allowed because it will not affect the convergence time for that position. The specifics are presented in Appendix \ref{app:algorithms}, Algorithm \ref{algo:csw-duplicate}.

\subsection {Obfuscation client access}
\label{csw-oc}

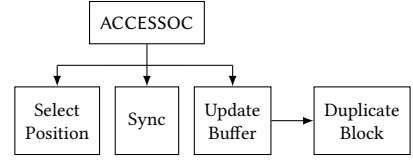
\begin{figure}[t]
	\begin{center}
		\begin{tikzpicture}[every node/.style={node distance=0pt,rectangle,draw,minimum height=32pt,minimum width=30pt,inner sep=5pt,align=center}, scale=0.8, transform shape]
		
		\node (read) {Select\\Position};
		\node [right=7pt of read] (sync) {Sync};
		\node [right=7pt of sync] (update) {Update\\Buffer};
		
		\node [right=20pt of update] (duplicate) {Duplicate\\Block};
		\draw[-latex] (update.east) -- (duplicate.west);
				
		\node [minimum height=20pt,anchor=center, above = 20pt of sync](access) {$\ACCESSOC$};
		\coordinate[below=10pt of access] (accessc);
		\draw (access.south) -- (accessc);
		
		\coordinate[above=10pt of read] (readc);
		\coordinate[above=10pt of sync] (syncc);
		\coordinate[above=10pt of update] (updatec);

		\draw[latex-] (read.north) -- (readc);
		\draw[latex-] (sync.north) -- (syncc);
		\draw[latex-] (update.north) -- (updatec);
				
		\draw (accessc) -- (readc);
		\draw (accessc) -- (syncc);
		\draw (accessc) -- (updatec);
		
		\end{tikzpicture}
		\caption{\small{ $\ACCESSOC$ performs three actions: (1) selects and retrieves two blocks from the store, (2) synchronizes local map with metadata from retrieved blocks, and (3) attempts to replace retrieved blocks with buffered stored blocks.}}
		\label{fig:access_oc}
	\end{center}
	\vspace{-10pt}
\end{figure}

In \schemeacr{} data shuffling is performed on a single position at a time. This process leaks a lot of information to the store server because every run of the $\ACCESSRW$ function links two positions together with high probability (i.e. ideally, we require that the $DuplicateBlock$ function is successful every time).

To mitigate this we use a separate obfuscation client  function $\ACCESSOC$ as presented in Algorithm \ref{algo:csw-main-oc}. This function is identical to the $\ACCESSRW$ function as far as the interaction with the store is concerned: two blocks are read from two server positions and two blocks are written to the same positions.

\begin{algorithm2e}[h]
\small
\SetKwProg{function}{function}{}{}
\SetKwProg{rnd}{in random order}{}{}

\KwIn{local buffer, obfuscation client map, store, key}
\KwOut{store, obfuscation client map}

\SetKwProg{fn}{function}{}{}

\BlankLine
\function{$\ACCESSOC$ (buf, S, $map_{oc}$, k)}{
	
	\BlankLine
	$p_1 \gets SelectPosition (null, map_{oc})$\;
	$p_2 \gets SelectPosition (null, map_{oc})$\;
	
	\BlankLine
	$(blk_1,blk_2) \gets$ READ($p_1,p_2,S,k$)\;
	
	\BlankLine
	$(blk_1,map_{oc})\gets Sync(blk_1, p_1, map_{oc})$\;
	$(blk_2,map_{oc})\gets Sync(blk_2, p_2, map_{oc})$\;
	
	\BlankLine
	$(blk_1, buf) \gets \mathit{UpdateBuffer}(blk_1, p_1,\mathit{buf})$\;
	$(blk_2, buf) \gets \mathit{UpdateBuffer}(blk_2, p_2,\mathit{buf})$\;

	\BlankLine
	$S\gets$WRITE($S,k,blk_1,blk_2,p_1,p_2$)\;
	
	\Return (S,$map_{oc}$)\;
}
\caption{Obfuscation client main access function.}
\label{algo:csw-main-oc}
\end{algorithm2e}

The purpose of the OC is to prevent the store from linking the positions that are duplicated thorough the regular $\ACCESSRW$ function. This is done through the use of an OC locally stored buffer. $\ACCESSOC$ places the blocks read from the store into its buffer, while blocks chosen at random from those currently stored in the buffer are written back to store. The replacement procedure is bound to the same constraints as when blocks are duplicated by regular clients, so we use the same DuplicateBlock functionality to ensure these are enforced. 
A detailed version of our method is presented in Algorithm \ref{algo:csw-main-oc}. 


\section{Concurrency and parallel access in CAOS}
\label{sec:conc-proto}

In \schemeacr{} concurrency is supported by design. Regular clients communicate with the store using two messages that are abstracted here as follows: $READ(p_1,p_2)$ and $WRITE(p_1,p_2)$. The message $READ(p_1,p_2)$ is used to request the two blocks located at positions $p_1$ and $p_2$ in the store.  Upon receiving this message the server will lock $p_1$ and $p_2$ and will return the corresponding blocks to the client. The client will process the data, and will use the blocks metadata to update its local map (cf. Section \ref{CSW}). While the client processes the contents of the blocks the server will keep $p_1$ and $p_2$ locked. The positions $p_1$ and $p_2$ are unlocked when the server receives a $WRITE(p_1,p_2)$ message from the client that locked the positions. In the event the client is unable to send a $WRITE(p_1,p_2)$ message a timeout period is used to unlock them.

Recall that the client is  interested in retrieving the content of only one of the positions, say $p_1$, where $p_1$ is chosen randomly from the set of positions associated to a given block id in the client's local map. The other position $p_2$ is chosen from the store at random and is used to duplicate the data from $p_1$. As such, a second client can request the same block using different positions. In the event that a second client requests the same positions $p_1$ and/or $p_2$ while they are locked, the server will inform the client of the lockout through an error message. By running $\ACCESSRW$ again, the second client can select new positions $p'_1$ and $p'_2$ to retrieve the requested block and a duplicate-destination block. A detailed instantiation of this procedure is presented in Fig \ref{fig:concurrency}.

A secondary benefit from supporting concurrency by design is that it allows \textit{parallel access} to the store. Because each store access by CAOS functions only affects two positions at a time, regular clients (and OCs) can instantiate multiple $\ACCESSRW$ (or $\ACCESSOC$) operations simultaneously for different blocks. This increases access speeds when the data required is stored in multiple blocks. It also affects the speed of the obfuscation process which can be increased or decreased by adjusting the bandwidth available.

\section{Content and Access-Pattern Privacy for \schemeacr{} and Protocol Correctness}
\label{sec:secanalysis}

In this section we give a formal definition of content privacy and pattern access privacy for \schemeacr-like protocols. Next we analyse the privacy offered by our proposal \schemeacr{}. Finally, we present our invariants which ensure that our asynchronous  concurrency protocol does not result in unintentional data loss.

\begin{figure}[t]
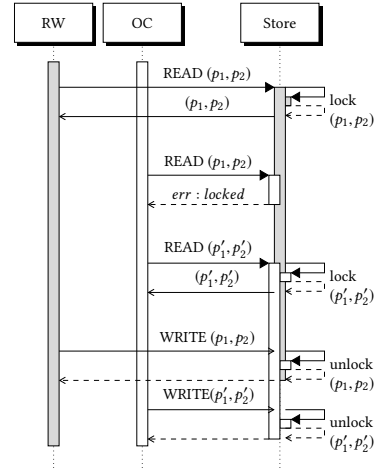

   \centering
   \scalebox{0.65}{
		\begin{sequencediagram}
	        \pgfumlsdunderlinefalse
	        \newthread{rw}{RW}
	        \newthread[white]{oc}{OC}
	        \newinst[1.2]{s}{Store}
	
	        \begin{call}{rw}{\hspace{45pt} READ $(p_1,p_2)$}{s}{}
	         \prelevel
	        	\begin{callself}{s}{}{\hspace{-5pt}\begin{tabular}{l}lock\\$(p_1,p_2)$\end{tabular}}
	        	\end{callself}
	        	\prelevel
	        	\mess{s}{$(p_1,p_2{)}$}{rw}{45pt}
	        	\postlevel
	        	\postlevel
	        	\postlevel
	        	\postlevel
	        	\postlevel
	        	\postlevel
	        	\postlevel
	        	\postlevel
	        \end{call}
	        
	        \prelevel
	        \prelevel
	        \prelevel
	        \prelevel
	        \prelevel
	        \prelevel
	        \prelevel
	        \prelevel
	        \setthreadbias{west}
	        
	        \begin{call}{oc}{READ $({p}_1,{p}_2)$}{s}{$err:locked$}
	        \end{call}
	        \postlevel
	        
	        \begin{call}{oc}{READ $(p_1',p_2')$}{s}{}
	        \prelevel
	        	\begin{callself}{s}{}{\hspace{-5pt}\begin{tabular}{l}lock\\$(p_1',p_2')$\end{tabular}}
	        	\end{callself}
	        	\prelevel
	        	\mess{s}{$(p_1',p_2'{)}$}{oc}{5pt}
	        	\postlevel
				\mess{rw}{WRITE $({p}_1,{p}_2{)}$}{s}{45pt}
	        	\prelevel
	        	\begin{callself}{s}{}{\hspace{-5pt}\begin{tabular}{l}unlock\\$({p}_1,{p}_2)$\end{tabular}}
	        	\end{callself}
	
	        	\mess{oc}{WRITE$(p_1',p_2'{)}$}{s}{-5pt}
	        	\prelevel
	         \begin{callself}{s}{}{\hspace{-5pt}\begin{tabular}{l}unlock\\$(p_1',p_2')$\end{tabular}}
	         \end{callself}
				\prelevel
	        \end{call}
		\end{sequencediagram}
	}
	\caption{\small{Concurrent access in \schemeacr{}. Upon receiving a $READ(p_1,p_2)$ message the server locks requested positions $p_1$ and $p_2$. If receiving a second request for $p_1$ and/or $p_2$, from a different client (e.g. the OC), the server will reply with an error. The second client can restart the protocol and request a new pair (e.g. $p'_1$ and $p'_2$). Positions are unlocked after the server processes the $WRITE$ position message (e.g. $WRITE(p_1,p_2)$ or $WRITE(p'_1,p'_2)$).}}
	\label{fig:concurrency}
	\vspace{-10pt}
\end{figure}

\vspace{10pt}

\subsection{Game-based privacy for \schemeacr{}}

We start with the following definition.


\begin{definition}
The \emph{access-pattern} induced by the $i$-th run of the $\ACCESSRW$ or $\ACCESSOC$ algorithms of a \schemeacr-like protocol $\schemesymb$  is the tuple $AP_i=(p_{r_1},\ldots,p_{r_l},\allowbreak p_{w_1},\allowbreak\ldots,p_{w_m})$, consisting of the $l$ positions read by the server and the $m$ positions written by the server. The access-pattern induced by a sequence of queries is the combination of the patterns induced by the individual queries.
\end{definition}

\begin{model} When the algorithms $\ACCESSRW$ and $\ACCESSOC$ are run,
  the adversary learns the access-pattern induced by those
  queries. The attacker is given access to the setup algorithms
    $\INITSTORE$ and $\INITOC$, and thus knows the initial layout of
    the store and OC buffer.
\end{model}

Using the access-pattern definition above we define the security of a generic \schemeacr-like construction $\schemesymb$ against a \textit{multiple query attacker} in the following.

\begin{definition}[data and access-pattern privacy]
	\label{def:multi-q}
	\schemeacr{} \emph{data and access-pattern privacy} is defined through a multiple query security experiment as follows (see also Fig \ref{fig:sec_exp1}).
	Let $\schemesymb=\allowbreak  (\KGen,\allowbreak\INITSTORE,\allowbreak\INITOC,\allowbreak\ACCESSRW,\ACCESSOC)$ be a \schemeacr-like protocol over an $(n, N)$ store $S$,  $\lambda$ a security parameter, $r$ be number of rounds the OC is run, and $b\in\{0,1\}$. Let $\adv=(\adv_{1,1},\ldots,\adv_{1,q},\adv_2)$  be a $q$-query adversary. Let $DB$ be a database. We consider ${\Exp}_{m,\adv}^{b}(\schemesymb)$, a probabilistic experiment defined in terms of a game played between an adversary $\adv$ and a challenger $\cdv$, consisting of:
\end{definition}

\begin{enumerate}[leftmargin=15pt]
	\item \textit{Setup.} 
	$\cdv$ runs $\KGen(1^\lambda)$ to create a symmetric key $k$, and runs $(\INITSTORE,\allowbreak \INITOC)$ to initialize the  store and the OC's internal buffer.
	
	\item{Query.} During each query $j$:
		\begin{enumerate}[leftmargin=5pt]
		\item \textit{Obfuscation.} $\cdv$ runs the obfuscation algorithm $\ACCESSOC(S,\allowbreak k)$, $r$ times.
	
		\item \textit{Challenge.} $\adv_1$ chooses two block id's, $B_0$ and $B_1$, two operations $op_0$ and $op_1$, and two data contents $d_0,d_1$ to be written if any of $op_0$ or $op_1$ is a write operation. $\cdv$ runs $\ACCESSRW(B_b, \allowbreak{op_b},\allowbreak{d_b}, \allowbreak S, \allowbreak k)$.
	
		\end{enumerate}
	
	\item \textit{Guess.} $\adv_2$ computes a guess $b' \in \{0,1\}$,  winning the game if $b'=b$.
 The output of the experiment is $b'$.
\end{enumerate}

The \emph{advantage of a multiple query adversary} $\adv=(\adv_{1,1},\ldots,\allowbreak\adv_{1,q},\allowbreak\adv_2)$ against the \emph{data and access-pattern privacy} of a \schemeacr-like protocol $\schemesymb$ is defined as: $$\displaystyle\big\lvert\prob{{\Exp}_{m,\adv}^{0}(\schemesymb)=1}- \prob{{\Exp}_{m,\adv}^{1}(\schemesymb)=1}\big\rvert.$$

\begin{figure}[t!]
	\hspace{-8pt}
	\framebox[0.5\textwidth][t]{
		\small
		\hspace{-5pt}
		\begin{tabular}{m{0.47\textwidth}}
			$Exp_{\mathsf{m},\mathcal{A}}^{b}(\schemesymb)$\\[2pt]\hline
	
			$k \xleftarrow{R} \KGen(1^\lambda)$ \\[2pt]
			$st_{0} \gets (\INITSTORE, \INITOC)$\\[2pt]
			$\mathbf{for}~j \in \{1,\ldots,q\}~\mathbf{do}$\\[2pt]
			\enskip $\mathbf{for}~ i \in \{0,\ldots,r-1\}~\mathbf{do}$\\[2pt]
			\enskip\enskip $st_{i} \gets \ACCESSOC(S,k)$\\[2pt]
			\enskip $\mathbf{endfor}$ \\[2pt]
			\enskip $(aux_j,B_{0,j}, B_{1,j}, op_{0,j}, op_{1,j}, d_{0,j},d_{1,j})\gets \mathcal{A}_{1,j}(st_{0},\ldots,st_{r-1})$ \\[2pt]
			\enskip $st_{r,j} \gets \ACCESSRW(B_{b,j}, op_{b,j}, d_{b,j}, S, k)$ \\[2pt]	
			$\mathbf{endfor}$ \\[2pt]
			$b' \gets \mathcal{A}_{2}(aux_1,\ldots,aux_q,{st_{r,1}},\ldots,{st_{r,q}})$\\[2pt]
			$\mathbf{return}~ b'$
		\end{tabular}
	}
	\caption{\small{\schemeacr{} data and access-pattern privacy game for a multiple query adversary $\mathcal{A}=(\mathcal{A}_{1,1},\ldots,\mathcal{A}_{1,q},\mathcal{A}_2)$.}}
	\label{fig:sec_exp1}
\end{figure}

Let us argue that the security experiment above captures both content and access-pattern privacy for \schemeacr-like protocols. Data privacy against the server is captured by letting the adversary choose two equal-sized data blocks $B_0$ and $B_1$ (indexed by their block id's) before every $\ACCESSRW$ query. The read-write (read-only) client will call $\ACCESSRW$ on data block $B_b$, where $b\in\{0,1\}$ is unknown to the adversary. On the other hand, access-pattern privacy against the server is captured by an adversarial choosing of the operation to be performed, i.e. either $op_0$ or $op_1$, depending on the value of $b$ that is unknown to the adversary. During the experiment, the adversary $\adv$ has full knowledge of the instructions run and the changes occurred in the store when the read-only (read-write) and obfuscation clients will be calling the algorithms $\ACCESSRW,\ACCESSOC$ as defined by the corresponding protocol $\schemesymb$. The attacker's advantage as defined above measures how well the adversary does in gaining knowledge on $b$ by querying $\ACCESSRW,\ACCESSOC$ a number $j$ of times.
We obtain the following result:

\begin{theorem} \schemeacr{} has content and access-pattern privacy, i.e. the advantage of any multiple query adversary against the privacy of $\schemesymb$ is negligible in the security parameter $\lambda$ and the number of OC rounds $r$. 
\end{theorem}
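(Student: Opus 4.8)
The plan is to reduce the security of $\schemesymb$ to two sources of indistinguishability: the IND-CPA security of the underlying symmetric encryption scheme used to encrypt blocks, and the statistical indistinguishability of the access-patterns $AP_i$ induced when the challenger runs $\ACCESSRW(B_b,op_b,d_b,S,k)$ for $b=0$ versus $b=1$, conditioned on the obfuscation rounds having been executed. The high-level structure will be a sequence of game hops from $\Exp_{m,\adv}^{0}$ to $\Exp_{m,\adv}^{1}$. In the first hop I would replace every ciphertext written to the store (by both $\ACCESSRW$ and $\ACCESSOC$, across all $q$ rounds) with an encryption of a fixed dummy message of the appropriate length; a standard hybrid argument over the polynomially many encryption operations shows this changes the adversary's view by at most $\mathrm{poly}(\lambda)\cdot\mathbf{Adv}^{\mathrm{ind\text{-}cpa}}$, which is negligible. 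After this hop the \emph{contents} the adversary sees are independent of $b$, and only the access-pattern (the positions read and written, together with the bookkeeping that is visible through the lock/unlock timing) can still depend on $b$.

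The core of the argument is then the access-pattern part. Here I would use the structure of $\ACCESSRW$: on input $(B,op,d)$ the client computes $req\_p \gets SelectPosition(B,map_c)$, i.e. a position drawn uniformly from $map_c[B].psns$, and $cpy\_p \gets SelectPosition(null,map_c)$, a uniformly random store position independent of $B$. The write touches exactly $\{req\_p, cpy\_p\}$. Thus the access-pattern of a single challenge query is determined by the distribution of $req\_p$, which depends on $b$ only through which list $map_c[B_b].psns$ it is sampled from. The key claim to establish is that after $r$ obfuscation rounds the map state — specifically, the collection of lists $\{map_c[B].psns\}$ as reflected in the store's position assignments — has mixed enough that the distribution of $(req\_p,cpy\_p)$ for $B_0$ is statistically $\epsilon(r)$-close to that for $B_1$, with $\epsilon(r)\to 0$ as $r$ grows. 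Intuitively each $\ACCESSOC$ call picks two uniformly random positions and (via $DuplicateBlock$) reassigns one of them to a block pulled uniformly from the OC buffer, so the induced random walk on the space of position-to-block assignments has a stationary distribution under which block identities are symmetric; I would bound the mixing time of this walk and set $\epsilon(r)$ to the resulting distance. Summing over the $q$ challenge queries by a union bound gives total access-pattern advantage at most $q\cdot\epsilon(r)$, still negligible in $r$ for any polynomial $q$.

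Putting the two pieces together, the overall advantage is bounded by $\mathrm{poly}(\lambda)\cdot\mathbf{Adv}^{\mathrm{ind\text{-}cpa}}(\lambda) + q\cdot\epsilon(r)$, which is negligible in $\lambda$ and in $r$ as claimed. I expect the main obstacle to be the mixing-time estimate for the obfuscation walk: one has to argue carefully that (i) the walk is actually ergodic on the relevant state space despite the $DuplicateBlock$ constraint that a position is not reassigned while $block.cns < |clients|$, (ii) that the invariant ``for each block there is a position known to all clients'' does not introduce a $b$-dependent correlation into which positions are eligible, and (iii) that the lock/unlock timing visible to the adversary (the error messages in Fig.~\ref{fig:concurrency}) leaks nothing beyond the positions themselves. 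A secondary subtlety is that $\adv$ is adaptive across the $q$ rounds, so the hybrid for the encryption hop and the union bound for the access-pattern hop must both be phrased so that the reduction can simulate all of $\adv_{1,1},\ldots,\adv_{1,q},\adv_2$ without knowing $b$; this is routine once the per-round statistical bound is in place, since the challenger's behaviour in round $j$ depends only on the (now $b$-independent up to $\epsilon$) store state and on $\adv_{1,j}$'s output.
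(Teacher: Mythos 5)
Your overall architecture --- an IND-CPA hybrid to dispose of content privacy, followed by a statistical argument that the challenge access pattern becomes nearly independent of $b$ after $r$ obfuscation rounds --- matches the paper's, which proceeds through game hops Game~$2,j$ (swap $d_{0,i}$ for $d_{1,i}$, bounded by semantic security of the encryption), Game~$3,j$ (swap $B_{0,i}$ for $B_{1,i}$, the statistical step), and Game~$4,j$ (swap $op_{0,i}$ for $op_{1,i}$). One small difference: you fold read/write indistinguishability into the statistical step, whereas the paper gets it exactly, at zero cost, because $\ACCESSRW$ performs the identical read-then-write sequence for both operations.

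The genuine gap is in the step you yourself flag as the main obstacle: the mixing bound. You propose to bound the mixing time of a random walk on the space of full position-to-block assignments of the store and to argue its ergodicity despite the $DuplicateBlock$ eligibility constraints; as written this is only a plan, so the theorem's quantitative content (negligibility in $r$) is never established, and the state space you chose makes the task harder than necessary. The paper avoids analysing the store-wide assignment altogether. It conditions on the event that, during the second half of the $r$ OC rounds, \emph{every} store position has been overwritten by a block emerging from the OC buffer; the probability of the complementary event, $p_{N,r}$, is computed by inclusion--exclusion (Lemma~1) and is negligible in $r$. Conditioned on that event, the adversary's posterior on which block sits at which position is governed solely by the output distribution $\out_j(\cdot)$ of the buffer, which the paper models as a Markov chain on states $(X,\sel)$ with $X$ the set of buffered block ids; Perron--Frobenius then gives $|\out_j(B)-\out_{j'}(B)|\leq\binom{n-1}{s-1}\cdot C\cdot|\lambda_2|^{r}$ (Lemma~2), yielding a per-hop advantage of roughly $(1-p_{N,r})\binom{n-1}{s-1}C|\lambda_2|^{r}+p_{N,r}$. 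To complete your proof you would need to adopt some such decomposition (an overwriting event plus the buffer-contents chain) rather than attack the global assignment walk, for which the obstacles you list in (i) and (ii) are real.
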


See Appendix \ref{app:privacy} for a proof.

\subsection {Invariants in \schemeacr{}}
\label{correctness}

Different clients hold different position maps, and therefore it is possible that a client's position map will be out of date. Being a little bit out of date is not a problem: if a client seeks a block at a position and finds a different block there, the client can detect that, and seek the block at another position. However, we need to ensure that a client will always eventually find the block. We therefore prove the
following invariant: \emph{for every client and every block, the client has
a valid position for the block in its map}. See Appendix \ref{app:invariants} for a proof.

\section{Performance and Implementation}

The security of \schemeacr{} derives from the fact that the obfuscation client (OC)  runs in between the accesses made by the regular clients. Intuitively, the more runs of OC, the more secure the system. In this section, we derive the expected number of rounds of OC needed to get perfect security (i.e.\ a situation in which the adversary has no knowledge of what block is in what position). In our algorithm, OC processes two positions per round. For simplicity, in this section, we consider an OC that processes a single position per round. Intuitively, the performance of the two-round OC is no worse than a one-round OC.

\begin{theorem}\label{thm:expected}
	The expected number of rounds of OC needed to obtain perfect privacy for a $(n,N)$ store is at most $2+s+ (n-s)\log(n-s)+ N\log N$, where $s$ is the OC's buffer size.
\end{theorem}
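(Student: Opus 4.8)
The plan is to bound the expected length of a \emph{strong uniform time} for the (one-position-per-round) obfuscation process: a randomized stopping time $\tau$ with the property that, conditioned on every piece of information the adversary observes up to round $\tau$ (the initial store and buffer layout, plus the sequence of positions read and written), the block-to-position assignment in $S$ is distributed uniformly over all admissible layouts. By the definition recalled just before the statement, the existence of such a $\tau$ with $\mathbb{E}[\tau]\le 2+s+(n-s)\log(n-s)+N\log N$ is exactly what is required, so the work reduces to exhibiting $\tau$ as a sum of three sub-times and bounding each by linearity of expectation. (Equivalently one could phrase this as a coupling argument, but the stopping-time formulation makes the dependencies below easier to handle.)

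First I would analyse a \emph{buffer-saturation phase}. Because every round reads a freshly re-encrypted block from a uniformly random position into the buffer and writes a uniformly random buffer element back, the adversary's residual uncertainty about $S$ cannot start to grow until the buffer itself carries uncertainty. I would account for the first $\le 2+s$ rounds as the time needed for the buffer to reach working capacity (i.e.\ for the $s$ slots that the adversary knows from initialisation to be populated/evicted), and for a further expected $\le (n-s)\log(n-s)$ rounds as a coupon-collector bound over the $n-s$ distinct blocks that are not initially resident in the buffer: once each such block has passed through the buffer at least once, the multiset of blocks held in the buffer is, in the adversary's conditional distribution, uniform. The delicate point is that these occupancy/coupon-collector estimates must be run \emph{inside the adversary's conditional distribution}, not over the real randomness, and that the per-round capture probability of each not-yet-seen block must be lower-bounded (this is where the evolving copy-count distribution and the duplication — read-adds, write-removes — semantics of $\ACCESSOC$ enter, as opposed to a plain swap).

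Second, conditioned on the buffer being uniformly mixed after the above $\le 2+s+(n-s)\log(n-s)$ rounds, every subsequent write deposits a uniformly random re-encrypted block at the chosen position, so a position becomes perfectly hidden the first time it is written \emph{after} buffer saturation; the remaining waiting time is then the time to cover all $N$ positions by sampling with replacement, of expectation $N H_N\le N\log N+O(N)$, and I would absorb the lower-order contribution together with the overlap from the earlier phases (which already touch many positions) into the stated additive slack. The step I expect to be the main obstacle is precisely the circular dependency ``the buffer is well mixed iff the positions feeding it are well mixed, and vice versa'': resolving this is exactly what forces the strong-uniform-time set-up, where the stopping rule is fixed first and conditional uniformity is verified afterwards, and it is also where one must check that using $\ACCESSOC$'s duplication-based \textsf{DuplicateBlock} rather than a straight transposition does not invalidate the invariant.
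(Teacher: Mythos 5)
Your two-phase decomposition (first randomise the buffer, then wait for every store position to be overwritten) is exactly the paper's, and your second phase --- a coupon collector over the $N$ positions with expectation $\le 1+N\log N$ --- matches the paper's computation of $r_2$ verbatim. The gap is in the first phase. The paper bounds the buffer-randomisation time $r_1$ by comparing the OC's move (a random element of the buffer $U$ exchanged with a random element of the complement $V$) to a top-to-random shuffle and running the classical ``follow the original bottom card'' strong stationary time: the bottom card needs $(n-s)+\frac{n-s}{2}+\dots+1\le 1+(n-s)\log(n-s)$ expected rounds to rise through $V$, and then exactly $s$ further rounds to transit \emph{deterministically} through $U$ --- that deterministic transit is where the additive $s$ in the theorem comes from. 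Your accounting instead assigns the $2+s$ budget to ``populating/evicting the $s$ slots the adversary knows from initialisation''; but eviction is uniform over the $s$ buffer slots, so flushing all $s$ initially known blocks is itself a coupon-collector problem costing $\Theta(s\log s)$ expected rounds, not $s$. Taken literally, your phase one yields roughly $s\log s+(n-s)\log(n-s)$, which overshoots the claimed bound.

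A second issue: your stopping rule ``every block not initially resident has passed through the buffer at least once'' is the strong stationary time for the \emph{inverse} (random-to-top-style) dynamics; for the forward process it is not immediate that conditioning on this event makes the arrangement uniform, and you would need either a shuffle-reversal argument or the paper's direct domination chain (OBS is no worse than OBS$'$, which is no worse than OBS$''$) to justify it. On the positive side, you correctly flag two points the paper glosses over: that the incoming block is uniform over \emph{blocks} only in an idealised $n$-element array model (in the real store the OC samples uniform \emph{positions}, so blocks are weighted by their copy counts), and that the estimates should in principle be run under the adversary's conditional distribution. Replacing your phase-one accounting with the follow-the-bottom-card argument recovers the stated bound; the rest of your plan then goes through as in the paper.
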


 Since $N$ is much greater than $n$ and $s$, this is dominated by $N\log N$. Intuitively, perfect privacy cannot be achieved with less than $N$ rounds, since every store position needs to be updated. See Appendix \ref{app:perfanalysis} for a proof.

\subsection{Performance}

We next report on the performance of our instantiation of \schemeacr{} in terms of storage and bandwidth, and present measured data from a non-optimized implementation of our protocol \cite{source_code} for a 64GB store size.

\smallskip\noindent\textit{Read-write client.}
Client storage in \schemeacr{} is limited to the client map and the two blocks from the store retrieved by the $\ACCESSRW$. These, however, are only stored only for the duration of the method run.

%
Our bandwidth requirements are constant: two blocks are accessed for each requested data block from the store. Because the storage server only locks the positions requested for the duration of a client access, multiple blocks can be requested simultaneously in a parallel manner.

\smallskip \noindent \textit{Obfuscation client.} The OC accesses the store the same way the RW client does, hence the bandwidth requirements per block are the same. The OC requires additional storage than the read-write client in the form of a local buffer which is needed for the obfuscation process. In Appendix \ref{app:perfanalysis} we show how to compute $r_1$ the number of rounds required to obtain secure output from the OC for any given buffer size $s$. We note that the rate given by Eq. \ref{eq:r1adv}, Appendix \ref{app:perfanalysis} is valid for a single positions processed per round. Given that the OC behaves identical to the read-write client implies that the rate is in fact  doubled.

Similarly, one can also compute an upper bound on the number of rounds needed to bring the store from a completely insecure state (i.e. adversary knows the contents of each position) to a secure state using Eq. \ref{eq:r2exp}, Appendix \ref{app:perfanalysis}.

\smallskip \noindent \textit{Storage server.}
In order for our \schemeacr{} protocol to guarantee that blocks can be shuffled between positions it is required that each block is stored redundantly on a number of positions equal to the number of clients using \schemeacr{}. However, if the additional space is not available when the store is initialised, our protocol will automatically adjust itself allowing the size of the store to be increased gradually.

\smallskip \noindent \textit{Implementation.}
We have implemented \schemeacr{} in C++ using the OpenSSL-1.0.2.k for the encryption operations (concretely we use AES-128 in CBC mode) and Protocol Buffers\footnote{https://developers.google.com/protocol-buffers} for network message serialisation. Our setup consists of a server which exposes a block-store API with direct access to disk. The server's API is addressable through positions. In addition to the server we implemented a RW client and an OC client which are able to run $\ACCESSRW$ and $\ACCESSOC$ respectively.

On an Intel i5-3570 CPU at 3.40GHz running ArchLinux we have instantiated a secure store of 64GB with a redundancy factor $C=2$ (i.e. every block is stored on two positions resulting in $N=2n$). We measured the read/write speed to the store, client storage and server storage. Our results are presented in Table \ref{tbl:performance}.

\begin{table}[t]
  \centering
  \begin{tabularx}{0.46\textwidth}{|>{\hsize=0.38\hsize}X|>{\hsize=0.33\hsize}X|>{\hsize=0.29\hsize}X|}
  	 	\hline
    							& Protocol 	& Instantiation\\
    		& $\schemesymb$ Sct. \ref{CSW} &	with  $n=10^6$  			\\	\hline\hline
    	Store capacity		& $n\cdot |block|$	& $64$ GB \\\hline
    	Server storage	&  $C\cdot n \cdot |block|,$ & $128$ GB\\
    	requirements & $C > 1$ & $C=2$ \\\hline\hline
    	\begin{tabular}[x]{@{}l@{}} RW/RO client sto-\\rage requirements\end{tabular}		& $c\cdot n$ ($c \ll 1$)& 55 MB \\\hline
    	\begin{tabular}[x]{@{}l@{}} RW/RO client\\bandwidth\end{tabular} & $ps_1 \cdot 2\cdot |block|$ & $1 \cdot 2\cdot 64$KB\\\hline\hline
    	OC buffer & $s\cdot |block| $& 655 MB \\
    	size			   	& ($2\leq s \leq n$)& ($s=10^4$) \\\hline
    	\begin{tabular}[x]{@{}l@{}} OC\\bandwidth\end{tabular} & $ps_2 \cdot 2\cdot |block|$ & $1 \cdot 2\cdot 64$KB\\\hline
  \end{tabularx}
  \flushright
  \begin{tabularx}{0.286\textwidth}{|>{\hsize=0.53\hsize}X|>{\hsize=0.47\hsize}X|}\hline
      	$\frac{\text{client~storage}}{\text{store~capacity}}$ &  0.08\% \\\hline
      	$|block|$ &  64KB \\\hline
      	Measured &131 KB/s\\
      	speed		& per thread\\\hline
  \end{tabularx}
  \hspace{1pt}
  \vspace{5pt}
  \caption{\small{Storage and bandwidth requirements to store of $n$ data blocks of size $|block|$ bytes in our \schemeacr{} instantiation.  $c,C$ are constants that are implementation-dependent and $ps_1$ and $ps_2$ are the number of parallel sessions run by the client and by the OC respectively. Figures are given in a generic notation (middle column), and as obtained from our own non-optimized implementation of the protocol (right column).}}
  \label{tbl:performance}
  \vspace{-10pt}
\end{table}

\section{Conclusion}
In this paper we have proposed \schemefull{} (\schemeacr{}), a  cloud storage solution that is able to provide access-pattern obfuscation. We have presented our concurrent access protocol that allows multiple read-only clients to simultaneously access a \schemeacr{} store. We have proved that the concurrent access does not result in data loss. We have also proven the security of our protocol for any buffer size $s$ used by the obfuscation client given a sufficient numbers of rounds $r$. Finally, we have shown that security provided by the \schemeacr{} protocol is proportional to the resources provisioned for access-pattern obfuscation, namely the buffer size $s$ and the number of rounds $r$. 
To our knowledge, there is no existing work that uses an obfuscation client to hide access patterns. However several works have been influential in the development of \schemeacr{}, and we list them in Appendix \ref{app:related}.

\bibliographystyle{ACM-Reference-Format}
\bibliography{caos}


\begin{thebibliography}{23}


\ifx \showCODEN    \undefined \def \showCODEN     #1{\unskip}     \fi
\ifx \showDOI      \undefined \def \showDOI       #1{#1}\fi
\ifx \showISBNx    \undefined \def \showISBNx     #1{\unskip}     \fi
\ifx \showISBNxiii \undefined \def \showISBNxiii  #1{\unskip}     \fi
\ifx \showISSN     \undefined \def \showISSN      #1{\unskip}     \fi
\ifx \showLCCN     \undefined \def \showLCCN      #1{\unskip}     \fi
\ifx \shownote     \undefined \def \shownote      #1{#1}          \fi
\ifx \showarticletitle \undefined \def \showarticletitle #1{#1}   \fi
\ifx \showURL      \undefined \def \showURL       {\relax}        \fi
\providecommand\bibfield[2]{#2}
\providecommand\bibinfo[2]{#2}
\providecommand\natexlab[1]{#1}
\providecommand\showeprint[2][]{arXiv:#2}

\bibitem[\protect\citeauthoryear{??}{sou}{[n. d.]}]%
        {source_code}
 \bibinfo{year}{[n. d.]}\natexlab{}.
\newblock \bibinfo{title}{{CAOS Souce Code}}.
\newblock
\newblock
\urldef\tempurl%
\url{https://github.com/meehien/caos}
\showURL{%
\tempurl}


\bibitem[\protect\citeauthoryear{Aldous and Diaconis}{Aldous and
  Diaconis}{1986}]%
        {aldous1986shuffling}
\bibfield{author}{\bibinfo{person}{David Aldous} {and} \bibinfo{person}{Persi
  Diaconis}.} \bibinfo{year}{1986}\natexlab{}.
\newblock \showarticletitle{Shuffling cards and stopping times}.
\newblock \bibinfo{journal}{\emph{The American Mathematical Monthly}}
  \bibinfo{volume}{93}, \bibinfo{number}{5} (\bibinfo{year}{1986}),
  \bibinfo{pages}{333--348}.
\newblock


\bibitem[\protect\citeauthoryear{Aldous and Fill}{Aldous and Fill}{2002}]%
        {aldous2002reversible}
\bibfield{author}{\bibinfo{person}{David Aldous} {and} \bibinfo{person}{Jim
  Fill}.} \bibinfo{year}{2002}\natexlab{}.
\newblock \bibinfo{title}{Reversible Markov chains and random walks on graphs}.
\newblock
\newblock


\bibitem[\protect\citeauthoryear{Back{\aa}ker}{Back{\aa}ker}{2012}]%
        {backaaker2012google}
\bibfield{author}{\bibinfo{person}{Fredrik Back{\aa}ker}.}
  \bibinfo{year}{2012}\natexlab{}.
\newblock \emph{\bibinfo{title}{The Google Markov Chain: convergence speed and
  eigenvalues}}.
\newblock \bibinfo{thesistype}{Ph.D. Dissertation}. \bibinfo{school}{Master
  Thesis, Uppsala University, Sweden}.
\newblock


\bibitem[\protect\citeauthoryear{Bellare, Desai, Pointcheval, and
  Rogaway}{Bellare et~al\mbox{.}}{1998}]%
        {indcpa}
\bibfield{author}{\bibinfo{person}{Mihir Bellare}, \bibinfo{person}{Anand
  Desai}, \bibinfo{person}{David Pointcheval}, {and} \bibinfo{person}{Phillip
  Rogaway}.} \bibinfo{year}{1998}\natexlab{}.
\newblock \showarticletitle{Relations Among Notions of Security for Public-Key
  Encryption Schemes}. In \bibinfo{booktitle}{\emph{Advances in Cryptology -
  {CRYPTO} '98}}, Vol.~\bibinfo{volume}{1462}. \bibinfo{publisher}{Springer},
  \bibinfo{pages}{26--45}.
\newblock


\bibitem[\protect\citeauthoryear{Chor, Goldreich, Kushilevitz, and Sudan}{Chor
  et~al\mbox{.}}{1995}]%
        {chor1995private}
\bibfield{author}{\bibinfo{person}{Benny Chor}, \bibinfo{person}{Oded
  Goldreich}, \bibinfo{person}{Eyal Kushilevitz}, {and} \bibinfo{person}{Madhu
  Sudan}.} \bibinfo{year}{1995}\natexlab{}.
\newblock \showarticletitle{Private information retrieval}. In
  \bibinfo{booktitle}{\emph{Foundations of Computer Science, 1995.
  Proceedings., 36th Annual Symposium on}}. IEEE, \bibinfo{pages}{41--50}.
\newblock


\bibitem[\protect\citeauthoryear{{Cloud Security Alliance}}{{Cloud Security
  Alliance}}{2016}]%
        {cloud_security_alliance}
\bibfield{author}{\bibinfo{person}{{Cloud Security Alliance}}.}
  \bibinfo{year}{2016}\natexlab{}.
\newblock \bibinfo{booktitle}{\emph{{The Treacherous 12} - Cloud Computing Top
  Threats in 2016}}.
\newblock \bibinfo{type}{{T}echnical {R}eport}.
\newblock


\bibitem[\protect\citeauthoryear{Curtmola, Garay, Kamara, and
  Ostrovsky}{Curtmola et~al\mbox{.}}{[n. d.]}]%
        {curtmola_searchable_2006}
\bibfield{author}{\bibinfo{person}{Reza Curtmola}, \bibinfo{person}{Juan
  Garay}, \bibinfo{person}{Seny Kamara}, {and} \bibinfo{person}{Rafail
  Ostrovsky}.} \bibinfo{year}{[n. d.]}\natexlab{}.
\newblock \showarticletitle{Searchable symmetric encryption: {Improved}
  definitions and efficient constructions}.
\newblock \bibinfo{journal}{\emph{Proceedings of the 13th ACM Conference on
  Computer and communications security (CCS'06)}} (\bibinfo{year}{[n. d.]}),
  \bibinfo{pages}{79--88}.
\newblock


\bibitem[\protect\citeauthoryear{{David Cash}, {Stanislaw Jarecki}, {Charanjit
  Jutla}, {Hugo Krawczyk}, {Marcel Rosu}, and {Michael Steiner}}{{David Cash}
  et~al\mbox{.}}{[n. d.]}]%
        {david_cash_highly-scalable_2013}
\bibfield{author}{\bibinfo{person}{{David Cash}}, \bibinfo{person}{{Stanislaw
  Jarecki}}, \bibinfo{person}{{Charanjit Jutla}}, \bibinfo{person}{{Hugo
  Krawczyk}}, \bibinfo{person}{{Marcel Rosu}}, {and} \bibinfo{person}{{Michael
  Steiner}}.} \bibinfo{year}{[n. d.]}\natexlab{}.
\newblock \showarticletitle{Highly-{Scalable} {Searchable} {Symmetric}
  {Encryption} with {Support} for {Boolean} {Queries}}. In
  \bibinfo{booktitle}{\emph{Advances in {Cryptology} ({CRYPTO} '13)}}.
  \bibinfo{pages}{353--373}.
\newblock


\bibitem[\protect\citeauthoryear{Devadas, van Dijk, Fletcher, Ren, Shi, and
  Wichs}{Devadas et~al\mbox{.}}{2016}]%
        {devadas2016onion}
\bibfield{author}{\bibinfo{person}{Srinivas Devadas}, \bibinfo{person}{Marten
  van Dijk}, \bibinfo{person}{Christopher~W Fletcher}, \bibinfo{person}{Ling
  Ren}, \bibinfo{person}{Elaine Shi}, {and} \bibinfo{person}{Daniel Wichs}.}
  \bibinfo{year}{2016}\natexlab{}.
\newblock \showarticletitle{{Onion ORAM: A constant bandwidth blowup Oblivious
  RAM}}.
\newblock In \bibinfo{booktitle}{\emph{Theory of Cryptography}}.
  \bibinfo{publisher}{Springer}, \bibinfo{pages}{145--174}.
\newblock


\bibitem[\protect\citeauthoryear{Diaconis, Fill, and Pitman}{Diaconis
  et~al\mbox{.}}{1992}]%
        {diaconis1992analysis}
\bibfield{author}{\bibinfo{person}{Persi Diaconis},
  \bibinfo{person}{James~Allen Fill}, {and} \bibinfo{person}{Jim Pitman}.}
  \bibinfo{year}{1992}\natexlab{}.
\newblock \showarticletitle{Analysis of top to random shuffles}.
\newblock \bibinfo{journal}{\emph{Combinatorics, probability and computing}}
  \bibinfo{volume}{1}, \bibinfo{number}{02} (\bibinfo{year}{1992}),
  \bibinfo{pages}{135--155}.
\newblock


\bibitem[\protect\citeauthoryear{Goldreich}{Goldreich}{1987}]%
        {goldreich1987towards}
\bibfield{author}{\bibinfo{person}{Oded Goldreich}.}
  \bibinfo{year}{1987}\natexlab{}.
\newblock \showarticletitle{{Towards a theory of software protection and
  simulation by Oblivious RAMs}}. In \bibinfo{booktitle}{\emph{STOC}}. ACM,
  \bibinfo{pages}{182--194}.
\newblock


\bibitem[\protect\citeauthoryear{Goldreich and Ostrovsky}{Goldreich and
  Ostrovsky}{1996}]%
        {goldreich1996software}
\bibfield{author}{\bibinfo{person}{Oded Goldreich} {and}
  \bibinfo{person}{Rafail Ostrovsky}.} \bibinfo{year}{1996}\natexlab{}.
\newblock \showarticletitle{{Software protection and simulation on Oblivious
  RAMs}}.
\newblock \bibinfo{journal}{\emph{Journal of the ACM (JACM)}}
  \bibinfo{volume}{43}, \bibinfo{number}{3} (\bibinfo{year}{1996}),
  \bibinfo{pages}{431--473}.
\newblock


\bibitem[\protect\citeauthoryear{Islam, Kuzu, and Kantarcioglu}{Islam
  et~al\mbox{.}}{2012}]%
        {islam2012access}
\bibfield{author}{\bibinfo{person}{Mohammad~Saiful Islam},
  \bibinfo{person}{Mehmet Kuzu}, {and} \bibinfo{person}{Murat Kantarcioglu}.}
  \bibinfo{year}{2012}\natexlab{}.
\newblock \showarticletitle{{Access Pattern disclosure on Searchable
  Encryption: Ramification, Attack and Mitigation.}}. In
  \bibinfo{booktitle}{\emph{NDSS}}, Vol.~\bibinfo{volume}{20}.
  \bibinfo{pages}{12}.
\newblock


\bibitem[\protect\citeauthoryear{Kamara}{Kamara}{2014}]%
        {senny2014blog}
\bibfield{author}{\bibinfo{person}{Seny Kamara}.}
  \bibinfo{year}{2014}\natexlab{}.
\newblock \bibinfo{title}{{Applied Crypto Highlights: Restricted Oblivious RAMs
  and Hidden Volume Encryption}}.
\newblock
  \bibinfo{howpublished}{http://outsourcedbits.org/2014/12/09/applied-crypto-highlights-restricted-oblivious-rams-and-hidden-volume-encryption/}.
\newblock


\bibitem[\protect\citeauthoryear{Ren, Fletcher, Kwon, Stefanov, Shi, Van~Dijk,
  and Devadas}{Ren et~al\mbox{.}}{2015}]%
        {ren2015constants}
\bibfield{author}{\bibinfo{person}{Ling Ren}, \bibinfo{person}{Christopher
  Fletcher}, \bibinfo{person}{Albert Kwon}, \bibinfo{person}{Emil Stefanov},
  \bibinfo{person}{Elaine Shi}, \bibinfo{person}{Marten Van~Dijk}, {and}
  \bibinfo{person}{Srinivas Devadas}.} \bibinfo{year}{2015}\natexlab{}.
\newblock \showarticletitle{{Constants count: practical improvements to
  Oblivious RAM}}. In \bibinfo{booktitle}{\emph{USENIX Security 15}}.
  \bibinfo{pages}{415--430}.
\newblock


\bibitem[\protect\citeauthoryear{Sahin, Zakhary, El~Abbadi, Lin, and
  Tessaro}{Sahin et~al\mbox{.}}{2016}]%
        {sahin2016taostore}
\bibfield{author}{\bibinfo{person}{Cetin Sahin}, \bibinfo{person}{Victor
  Zakhary}, \bibinfo{person}{Amr El~Abbadi}, \bibinfo{person}{Huijia~Rachel
  Lin}, {and} \bibinfo{person}{Stefano Tessaro}.}
  \bibinfo{year}{2016}\natexlab{}.
\newblock \showarticletitle{TaoStore: Overcoming Asynchronicity in Oblivious
  Data Storage}. In \bibinfo{booktitle}{\emph{Security and Privacy (SP), 2016
  IEEE Symposium on}}. \bibinfo{publisher}{Oakland}.
\newblock


\bibitem[\protect\citeauthoryear{{Seny Kamara} and {Charalampos
  Papamanthou}}{{Seny Kamara} and {Charalampos Papamanthou}}{2013}]%
        {seny_kamara_parallel_2013}
\bibfield{author}{\bibinfo{person}{{Seny Kamara}} {and}
  \bibinfo{person}{{Charalampos Papamanthou}}.}
  \bibinfo{year}{2013}\natexlab{}.
\newblock \showarticletitle{Parallel and {Dynamic} {Searchable} {Symmetric}
  {Encryption}}.
\newblock


\bibitem[\protect\citeauthoryear{Shi, Bethencourt, Chan, Song, and Perrig}{Shi
  et~al\mbox{.}}{[n. d.]}]%
        {shi2007multi}
\bibfield{author}{\bibinfo{person}{Elaine Shi}, \bibinfo{person}{John
  Bethencourt}, \bibinfo{person}{TH~Hubert Chan}, \bibinfo{person}{Dawn Song},
  {and} \bibinfo{person}{Adrian Perrig}.} \bibinfo{year}{[n. d.]}\natexlab{}.
\newblock \showarticletitle{Multi-dimensional range query over encrypted data}.
  In \bibinfo{booktitle}{\emph{IEEE S\&P 2007}}. \bibinfo{pages}{350--364}.
\newblock


\bibitem[\protect\citeauthoryear{Stefanov and Shi}{Stefanov and Shi}{2013}]%
        {stefanov2013oblivistore}
\bibfield{author}{\bibinfo{person}{Emil Stefanov} {and} \bibinfo{person}{Elaine
  Shi}.} \bibinfo{year}{2013}\natexlab{}.
\newblock \showarticletitle{Oblivistore: High performance oblivious cloud
  storage}. In \bibinfo{booktitle}{\emph{Security and Privacy (SP), 2013 IEEE
  Symposium on}}. IEEE, \bibinfo{pages}{253--267}.
\newblock


\bibitem[\protect\citeauthoryear{Stefanov, Shi, and Song}{Stefanov
  et~al\mbox{.}}{2012}]%
        {stefanov2011towards}
\bibfield{author}{\bibinfo{person}{Emil Stefanov}, \bibinfo{person}{Elaine
  Shi}, {and} \bibinfo{person}{Dawn Song}.} \bibinfo{year}{2012}\natexlab{}.
\newblock \showarticletitle{{Towards practical Oblivious RAM}}.
\newblock   \bibinfo{volume}{20} (\bibinfo{year}{2012}), \bibinfo{pages}{12}.
\newblock


\bibitem[\protect\citeauthoryear{Stefanov, Van~Dijk, Shi, Fletcher, Ren, Yu,
  and Devadas}{Stefanov et~al\mbox{.}}{2013}]%
        {stefanov2013path}
\bibfield{author}{\bibinfo{person}{Emil Stefanov}, \bibinfo{person}{Marten
  Van~Dijk}, \bibinfo{person}{Elaine Shi}, \bibinfo{person}{Christopher
  Fletcher}, \bibinfo{person}{Ling Ren}, \bibinfo{person}{Xiangyao Yu}, {and}
  \bibinfo{person}{Srinivas Devadas}.} \bibinfo{year}{2013}\natexlab{}.
\newblock \showarticletitle{{Path ORAM: an extremely simple oblivious RAM
  protocol}}. In \bibinfo{booktitle}{\emph{Proceedings ACM Conference on
  Computer \& communications security}}. \bibinfo{pages}{299--310}.
\newblock


\bibitem[\protect\citeauthoryear{Williams, Sion, and Tomescu}{Williams
  et~al\mbox{.}}{[n. d.]}]%
        {williams2012privatefs}
\bibfield{author}{\bibinfo{person}{Peter Williams}, \bibinfo{person}{Radu
  Sion}, {and} \bibinfo{person}{Alin Tomescu}.} \bibinfo{year}{[n.
  d.]}\natexlab{}.
\newblock \showarticletitle{{Privatefs: A parallel oblivious file system}}. In
  \bibinfo{booktitle}{\emph{Proceedings of ACM CCS 2012}}.
  \bibinfo{pages}{977--988}.
\newblock


\end{thebibliography}

\appendix
\section{\schemeacr{} use-case example}
Alice wishes to move her 1TB encrypted email store to the cloud, for easy access from her three devices: a work computer, a laptop and a mobile device. She also requires that her devices have simultaneously access and access-pattern protection.


One option for Alice is to use an ORAM scheme \cite{stefanov2013path, ren2015constants, devadas2016onion} as they provide both data encryption and access pattern security with $O(\log N)$ efficiency. 
With ORAM, Alice can expect to download and upload an average of about 468KB of data\footnote{For a 1TB ORAM store which uses 4KB blocks and 4 blocks per bucket, the size of transferred data is between 104KB and 832KB, depending on how the data is laid out in the server's memory and the ORAM scheme used.} for each 30KB email message she wants to access. 

To enable simultaneous access for her devices she can use a proxy-based ORAM scheme \cite{stefanov2013oblivistore, sahin2016taostore}. This would allow Alice's devices to use less bandwidth, because they would interact with the proxy, however, the bandwidth between the proxy and the store will remain similar (i.e. \cite{stefanov2013oblivistore} requires 30MB/s to provide 1MB/s access to the store, \cite{sahin2016taostore} requires a transfer of 256KB to access a 4KB store block). 
Unfortunately, the high bandwidth and computational requirements might force Alice to place the proxy in the cloud, an option she does not like because then access patterns to the proxy would leak information.

An alternative option for Alice would be to use \schemeacr{}. In \schemeacr{} her devices would have simultaneous and direct access to the store. Access-pattern obfuscation would be done by OCs which can be placed either in the cloud or on the premises because they access the store independently from her devices. However, access operations in \schemeacr{} randomize the store, and Alice's devices and OCs would require some form of synchronisation between them to be able to access it.

One na\"ive \schemeacr{} implementation which solves the synchronisation problem is to separate a part of the store, share it between devices and OCs, and use it to log the changes. During each access, clients would be able to process the log and update their access information (i.e. maps). However, if Alice were to stop using any one of her 3 device for just a month, upon resuming use, that device will have to process around 46MB worth of log entries\footnote{This number assumes that a log entry is 1KB, and that the remaining 2 devices will only access a 30KB email index that uses 8 blocks, 6 times per hour, 16 hours per day for 30 days. The number does not account for any OCs doing access-pattern obfuscation, in which case the number would be much higher.}
only to access a 30KB index, as the log grows linearly with the number of accesses. Security would also be affected. At the very least the variable network traffic required to retrieve the log reveals the period when the device was offline.

\section{Performance analysis}
\label{app:perfanalysis}

In the following we give a proof for Theorem \ref{thm:expected}.

\begin{proof}
	The number of rounds of OC needed is the sum of:
	\begin{enumerate}
		\item The number $r_1$ of rounds needed for the buffer contents to be randomised (this is needed only the first time OC is run, when the adversary knows the contents of the buffer);
		\item The number $r_2$ of rounds needed for every position in the store to have been overwritten by the buffer.
	\end{enumerate}
	To calculate $r_1$, consider an array of size $n$ containing all the blocks.
	Let $U$ be the sub-array containing the first $s$ blocks of the array (these represent the contents of the OC buffer), and $V$ the remaining $n-s$ blocks (representing those that are not in the buffer). In each round of OC, a random block $B$ from $U$ is selected and moved to $V$ (this corresponds to evicting a block from the buffer), while a random block from $V$ is moved to the former place of $B$ in $U$. We seek the expected number $r_1$ of runs which completely randomises the array.
	
	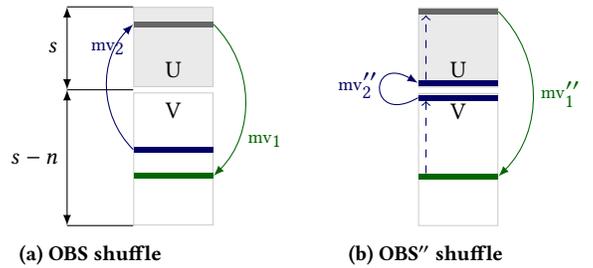
\begin{figure}[t!]
	\begin{center}
		\begin{subfigure}[b]{.30\linewidth}
		\begin{tikzpicture}[every node/.style={node distance=0pt,rectangle,draw,minimum height=32pt,minimum width=30pt,inner sep=5pt,align=center}, scale=1, transform shape]
		
		\node [minimum height=30pt, fill=black!40, opacity=0.2, minimum width=30pt](s){};
		\node [below=2pt of s, opacity=0.2, , minimum height=50pt,minimum width=30pt] (ns){};
		\node [below=-23pt of s, draw=none, minimum width=30pt](sc) {U};	
		\node [above=-23pt of ns, draw=none, minimum width=30pt](nsc) {V};
			
		\node [draw=none,below=-25pt of s,  fill=black!60, minimum height=2pt,inner sep=1pt,minimum width=30pt] (c2a){};
		
		\node [draw=none,below=-30pt of ns, fill=black!60!blue, minimum height=2pt,inner sep=1pt,minimum width=30pt] (c2b) {};
		
		\node [draw=none,below=-20pt of ns, fill=black!60!green, minimum height=2pt,inner sep=1pt,minimum width=30pt] (c1b) {};	
		
		\path[every node/.style={draw=none}] (c2a.east) edge[bend left=45,-latex,black!60!green] node [below right = 10pt and -2pt] {$\text{\footnotesize mv}_1$} (c1b.east);
		\path[every node/.style={draw=none}] (c2b.west) edge[bend left=45,-latex,black!60!blue] node [below left = -20pt and -10pt] {$\text{\footnotesize mv}_2$} (c2a.west);

		\coordinate [above left=0pt and 25pt of s](height4);
		\coordinate [below left=1pt and 25pt of s](height5);
		\coordinate [below left=1pt and 0pt of s](height51);
		\coordinate [below left=0pt and 25pt of ns](height6);
		\draw (s.north west) -- (height4);
		\draw (height51) -- (height5);
		\draw (ns.south west) -- (height6);
		\draw [latex-latex](height4) -- (height5);
		\draw [latex-latex](height5) -- (height6);
				
		\node [draw=none, left=15pt of s](eq1){$s$};
		\node [draw=none, left=22pt of ns](eq1){$s-n$};
		
		
		\end{tikzpicture}
		\caption{OBS shuffle}
		\label{fig:shuffle1}
		\end{subfigure} \hspace{50pt}
		\begin{subfigure}[b]{.30\linewidth}
			\begin{tikzpicture}[every node/.style={node distance=0pt,rectangle,draw,minimum height=32pt,minimum width=30pt,inner sep=5pt,align=center}, scale=1, transform shape]
			
				\node [minimum height=30pt, fill=black!40, opacity=0.2,minimum width=30pt](s){};
				\node [below=2pt of s, opacity=0.2,minimum height=50pt,minimum width=30pt] (ns){};
				\node [below=-23pt of s, draw=none, minimum width=30pt](sc) {U};	
				\node [above=-23pt of ns, draw=none, minimum width=30pt](nsc) {V};
				
				\node [draw=none, below=-30pt of s, fill=black!60, minimum height=2pt,inner sep=1pt,minimum width=30pt] (c1a){};
				
				\node [draw=none,below=-20pt of ns, fill=black!60!green, minimum height=2pt,inner sep=1pt,minimum width=30pt] (c1b) {};
						
				\path[every node/.style={draw=none}] (c1a.east) edge[bend left=45,-latex,black!60!green] node [right] {$\text{\footnotesize mv}''_1$} (c1b.east);
				
				\node [draw=none,above=-3pt of ns, fill=black!60!blue, minimum height=2pt,inner sep=1pt,minimum width=30pt] (c3b) {};
				\node [draw=none,below=-3pt of s, fill=black!60!blue, minimum height=2pt,inner sep=1pt,minimum width=30pt] (c3a) {};
						
				\path[every node/.style={draw=none}] (c3b.west) edge [bend left=120, looseness=10,-latex,black!60!blue] node [below left = -10pt and -2pt] {$\text{\footnotesize mv}''_2$} (c3a.west);
				
				\coordinate [above left=-3pt and -3pt of ns](c1l1);
				\coordinate [above left=0pt and -3pt of c1b](c1l2);
				\draw [dashed, black!60!blue,->] (c1l2) -- (c1l1);		
				\coordinate [above left=-3pt and -3pt of s](c2l1);
				\coordinate [above left=0pt and -3pt of c3a](c2l2);
				\draw [dashed, black!60!blue,->] (c2l2) -- (c2l1);
				
				\end{tikzpicture}
				\caption{OBS$''$ shuffle}
				\label{fig:shuffle2}
		\end{subfigure}
		\caption{\small{The OBS shuffle (a) consists of two moves: $mv_1$, which moves a random card from partition U into a random location of partition V; and $mv_2$, which moves a random card from partition V into the vacated place of partition U. In the OBS$''$ (b) shuffle $mv''_1$ moves the top card in U into a random location of V; and $mv''_2$ which moves the top card in V into the bottom place of U.}}
		\label{fig:shuffle}
	\end{center}
	\vspace{-20pt}
\end{figure}

	This situation is similar to shuffling a deck of $n$ cards, using a "top-to-random" shuffle, a well studied method \cite{aldous1986shuffling,diaconis1992analysis,aldous2002reversible}. In a top-to-random shuffle one repeatedly takes the top card of the deck and inserts it into a random position. In the following we will argue that our shuffle, OBS (Fig \ref{fig:shuffle1}), is no worse than a "top-to-random" shuffle.
	
	In OBS the deck is partitioned into $U$ and $V$. It takes a random card from $U$ and inserts it randomly in $V$, and takes a random card of $V$ and inserts it into the place vacated in $U$. Intuitively, this is no worse than the shuffle OBS$'$ which simply takes a random card in $U$ and inserts it in $V$, and takes the top card in $V$ and inserts it at the bottom of $U$. Now consider the shuffle OBS$''$ (Fig \ref{fig:shuffle2}) which is like OBS$'$ but instead of taking a random card in $U$, we take the top card in $U$. Again, OBS$'$ is no worse than OBS$''$, and therefore OBS is no worse that OBS$''$.
	
	To calculate the expected number of rounds of OBS$''$ needed, we proceed similarly to \cite{aldous1986shuffling}. Consider the bottom card of the deck. One has to wait on average $n-s$ rounds before a card is inserted below it. Then one has to wait $(n-s)/2$ more rounds before a card is again inserted below the original bottom card. Continuing in this way, the number of rounds needed for the original bottom to get to the U/V threshold is 
	\[ (n-s) + \frac{n-s}2 + \frac{n-s}3 +\dots + \frac{n-s}{n-s} \leq 1 + (n-s)\log
	(n-s).\]
	A further $s$ rounds are needed for the original bottom to progress from the U/V threshold to the top of the deck. Thus, the total number of rounds is $(n-s)\log(n-s)+s$. Since OBS is no worse than OBS$''$, we have 
	\begin{equation}
	r_1 \leq 1+ (n-s)\log(n-s)+s.
	\label{eq:r1adv}
	\end{equation}
	
	Calculation of $r_2$ is similar. The first of the $N$ positions is overwritten in one round.
	For the second position to be overwritten, we have to wait $N/(N-1)$ rounds; this is a bit longer,
	because we may accidentally overwrite the first one again. The $i$th position requires us to wait $N/(N-i)$ rounds. Thus, all the positions are overwritten after an expected
	\begin{equation}
	r_2=1+\frac{N}{N-1}+\dots+N \leq 1 + N\log N
	\label{eq:r2exp}
	\end{equation}
	rounds.
	Therefore, $r_1+r_2\leq 2+ (n-s)\log(n-s)+s +  N\log N$.
\end{proof}

\section{Analysis of \schemeacr{} content and access-pattern privacy}
\label{app:privacy}

Our aim now is to analyze privacy offered by \schemeacr{} against a multiple query adversary. To this end, we need to make use of two lemmas, starting with the following lemma.

\begin{lem}
\label{lemma}
Let $S$ be the $(n, N)$ \schemeacr-like store with $N=\{p_0,\ldots,p_{N-1}\}$ positions. Suppose that our $\ACCESSOC$ is run $r$ times. 
Let $no(p_i)$ be the event that the position $p_i$ was not overwritten during the $r$ rounds. The probability that, after $r$ runs of the $\ACCESSOC$ method, at least one of the $N$ positions has not been overwritten (i.e. at least one position survived being overwritten) is:
\begin{equation}
\begin{split}
	 p_{N,r}&=\prob{no(p_0)\vee\ \ldots\vee no(p_{N-1})}=\\
	 &=\sum_{i=1}^{N}{(-1)}^{i+1}\cdot\binom{N}{i}\cdot\left(\frac{N-i}{N}\right)^{r}
\end{split}
\label{eq:pnr}
\end{equation}
\end{lem}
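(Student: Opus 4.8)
The plan is to prove Equation~\eqref{eq:pnr} via the inclusion-exclusion principle applied to the events $no(p_0),\ldots,no(p_{N-1})$. The quantity we want, $\prob{no(p_0)\vee\cdots\vee no(p_{N-1})}$, is exactly the probability that the union of these events occurs, so inclusion-exclusion gives
\[
p_{N,r}=\sum_{i=1}^{N}(-1)^{i+1}\sum_{0\le j_1<\cdots<j_i\le N-1}\prob{no(p_{j_1})\wedge\cdots\wedge no(p_{j_i})}.
\]
The whole argument therefore reduces to computing $\prob{no(p_{j_1})\wedge\cdots\wedge no(p_{j_i})}$, i.e.\ the probability that a fixed set of $i$ positions all escape being overwritten over all $r$ rounds, and observing that this probability depends only on $i$ and not on which positions were chosen.

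First I would recall (from Section~\ref{csw-oc} and the simplification made in Section~\ref{app:perfanalysis}: one position processed per round) how $\ACCESSOC$ selects the position it overwrites. In each round the obfuscation client calls $SelectPosition(null,\cdot)$, which picks a position uniformly at random from the $N$ store positions, independently of previous rounds. Hence in a single round the probability that none of a fixed set of $i$ positions is the one chosen to be overwritten is $(N-i)/N$. Since the $r$ rounds are independent, the probability that this fixed set of $i$ positions is avoided in \emph{every} round is $\bigl((N-i)/N\bigr)^{r}$. This is the key computation; it is routine but it is where the independence and uniformity of the OC's position-selection are used, and it is worth stating explicitly that the one-position-per-round model (already adopted in the excerpt) is what makes ``overwritten'' coincide with ``selected''.

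Next I would substitute this into the inclusion-exclusion expansion. Because $\prob{no(p_{j_1})\wedge\cdots\wedge no(p_{j_i})}=\bigl((N-i)/N\bigr)^{r}$ is the same for all $\binom{N}{i}$ choices of the index set, the inner sum over $i$-subsets collapses to $\binom{N}{i}\bigl((N-i)/N\bigr)^{r}$, yielding
\[
p_{N,r}=\sum_{i=1}^{N}(-1)^{i+1}\binom{N}{i}\left(\frac{N-i}{N}\right)^{r},
\]
which is exactly \eqref{eq:pnr}. (The $i=N$ term contributes $0$, consistent with the fact that it is impossible for all $N$ positions to survive when at least one is overwritten each round, so the sum could equivalently be truncated at $N-1$.)

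The main obstacle is not the inclusion-exclusion bookkeeping, which is standard, but justifying the clean probabilistic model for a single round: namely that the position overwritten in each round of $\ACCESSOC$ is genuinely uniform over all $N$ positions and independent across rounds. One has to be a little careful because of concurrency locking (Section~\ref{sec:conc-proto}) and the $DuplicateBlock$ constraints, which could in principle bias or correlate the choices; the honest answer, matching the excerpt's stated simplification, is that we analyze the idealized OC that processes one uniformly-random position per round, for which the model is exact, and argue informally that the real two-position OC only overwrites positions at least as fast. With that modeling assumption in place, the rest of the proof is the short computation above.
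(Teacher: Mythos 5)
Your proof is correct and follows exactly the argument the paper intends: inclusion--exclusion over the events $no(p_{j_1})\wedge\cdots\wedge no(p_{j_i})$, each of which has probability $\left(\frac{N-i}{N}\right)^{r}$ under the one-uniform-position-per-round model (the paper itself offers no written proof beyond an $N=3$ example of this same computation). Your explicit flagging of the modeling assumption --- that ``overwritten'' coincides with ``selected uniformly and independently each round'' --- is a point the paper leaves implicit, and is a welcome addition rather than a deviation.
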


In order to increase its advantage in the privacy game, the adversary uses as leverage its knowledge of what blocks are in which positions in the store, and what blocks are in the OC buffer. As OC runs, the adversary might try to track the movement of blocks. We
can model the adversary's knowledge of what is inside the OC buffer as
a probability distribution. 

Suppose the buffer size is $s$.  The OC buffer contains at most one
occurrence of a given block id. Our goal in the following is to describe the evolution of the adversary's knowledge as a probability distribution. We start by recalling that, at each step, OC selects a block to be added to the buffer. 
If the selected block id is already in the buffer, then the set of block ids in the buffer is not changed. But if the selected block is not currently in the buffer, then one of the buffer's slots is chosen (with uniform probability) and the block id in that slot is evicted, in order to add
the selected one.

\noindent Therefore, the buffer contents will be $X$ after such a step if:
\begin{enumerate}
\item The buffer contents were already $X$, and an element in $X$ was
  selected; or
\item For some $x\in X$ and $e\not\in X$, the buffer contents was
  $(X-\{x\})\cup\{e\}$, and $x$ got selected for adding to the buffer, and $e$ was chosen for
eviction from the buffer and gets put back in the store where $x$ had been.
\end{enumerate}

We view this as a system of states and transitions.
A \emph{state} is the pair $(X,\sel)$, where $X$ is a set of block
ids, and $\sel$ is a function from block ids to reals that abstracts
the store. The value $\sel(x)$ is the
probability that a position in the store chosen uniformly at random
contains block id $x$. 
Since $x$ must occur at
least once in the store, and since every other block must also occur
at least once, $x$ appears at least once and at most $N-n+1$
times. Therefore, 
for any $x$, the value $\sel(x)$ is in the set
$\{ \frac1N, \frac2N, \dots, \frac{N-n+1}N \}$.

There are two kinds of transition, corresponding to the two 
points above:
\begin{enumerate}
\item For each $x\in X$: $(X,\sel)$ can transition to $(X,\sel)$, with probability
  $\sel(x)$.
\item For each $x\in X$ and $e\not\in X$:\\
  $((X-\{x\})\cup\{e\}),\sel\Big[{x\mapsto\sel(x)+\frac1N\atop e\mapsto\sel(e)-\frac1N}\Big])$
  can transition to $(X,\sel)$, with probability
  $\sel(x)\times\frac1s$. (Here, the notation $\sel[x\mapsto expr]$
  means the function $\sel$ but with $x$ mapping to $expr$. The factor
  $\frac1s$ in the transition probability is the
  probability that $e$ is evicted.)
\end{enumerate}
This system is easily seen to be a Markov chain with eigenvalues $1=|\lambda_1|>|\lambda_2|\geq\dots\geq|\lambda_n|$. Moreover, this Markov chain describes a random walk in a strongly connected Eulerian digraph in which nodes represent the possible states $(X,\sel)$ and the edges represent the transitions between them. 

\smallskip

Let us write
$\inb_j(X)$ for the probability that, after the $j$th run, the set 
of block ids present in the buffer is $X$. We can state the following lemma.

\begin{lem}\label{lem:out}
Let $\out_j(B)$ be the probability that the block coming out of an
OC with buffer size $s$ at the $j$-th step is the block $B$.
For all $j,j'>r$ and initial distributions $\inb_0$, the probability $|\out_j(B)-\out_{j'}(B)|$ 
is at most
\[\binom{n-1}{s-1}\cdot C \cdot |\lambda_2|^r\]
for some constant $C$ and $\lambda_2<  1$, and is therefore negligible in $r$.
\end{lem}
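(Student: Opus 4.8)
The plan is to analyze the Markov chain described in the text and bound the convergence rate of the distribution over buffer states, then transfer that bound to the distribution of the output block. First I would make precise the state space: states are pairs $(X,\sel)$ with $|X|=s$ and $\sel$ a ``store profile'' consistent with $X$, and the number of reachable states is finite. The key structural observation, already noted in the excerpt, is that the transition system is a random walk on a strongly connected Eulerian digraph — in particular the chain is irreducible and, because of the self-loops with positive probability $\sel(x)$, aperiodic, so it has a unique stationary distribution $\pi$ and the standard Perron--Frobenius / spectral-gap machinery applies: the second-largest eigenvalue modulus $|\lambda_2|$ is strictly less than $1$, and $\|\mu_j - \pi\|$ decays like $C'\,|\lambda_2|^{j}$ for any initial distribution $\mu_0$, where $C'$ depends on the chain (e.g.\ on $1/\pi_{\min}$ and the diagonalizing change of basis) but not on $j$.

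Next I would push this bound through the marginal $\inb_j(X)$, the probability that the \emph{set} of block ids in the buffer is $X$ after $j$ runs (this is a coarsening of the full state, obtained by summing over the $\sel$-component, so it converges at least as fast). The crucial point here is that $|\out_j(B)-\out_{j'}(B)|$ should be expressed as a linear functional of $\inb_j - \inb_{j'}$: conditioned on the buffer set being $X$ just before a step, the block emitted is the one evicted, which (when an eviction happens) is uniform over the $s$ slots of $X$, and when no eviction happens is the re-inserted selected block; in either description the conditional probability of emitting $B$ depends only on $X$ and the fixed store/selection mechanics. Hence $\out_j(B) = \sum_{X} \inb_{j-1}(X)\,g_B(X)$ for a fixed function $g_B$ with $0\le g_B\le 1$, and so $|\out_j(B)-\out_{j'}(B)| \le \sum_{X\ni B}|\inb_{j-1}(X)-\inb_{j'-1}(X)|$, the sum running over the sets $X$ that contain $B$ (the only ones from which $B$ can be emitted). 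The number of such sets is $\binom{n-1}{s-1}$, which is exactly the combinatorial prefactor appearing in the statement. Combining with the spectral decay bound $|\inb_{j-1}(X)-\inb_{j'-1}(X)|\le C\,|\lambda_2|^{j-1}$ for $j,j'>r$ (so $j-1\ge r$) yields $|\out_j(B)-\out_{j'}(B)|\le \binom{n-1}{s-1}\,C\,|\lambda_2|^{r}$, and since $|\lambda_2|<1$ this is negligible in $r$.

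The main obstacle I expect is justifying the uniform constant $C$ and the strict bound $|\lambda_2|<1$ cleanly: one must argue that the chain is genuinely irreducible on the reachable component (the excerpt asserts strong connectivity of the digraph, which I would take as given from the earlier discussion) and aperiodic, and then invoke a standard mixing-time estimate — e.g.\ $\|\mu_j-\pi\|_{TV}\le \tfrac12\,|\lambda_2|^{j}\sqrt{(1-\pi_{\min})/\pi_{\min}}$ for a reversible chain, or a more general Jordan-form bound for the non-reversible case — to extract a constant $C$ independent of $j$, $j'$, and the initial distribution $\inb_0$. A minor subtlety is that the chain as described runs ``backwards'' (transitions are written into $(X,\sel)$), so I would either reverse the arrows to get the forward chain or note that an Eulerian digraph's reversal is again strongly connected with the same stationary behaviour. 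Everything else — the coarsening inequality, the linear-functional representation of $\out_j$, and the count $\binom{n-1}{s-1}$ — is routine bookkeeping once the spectral-gap statement is in hand.
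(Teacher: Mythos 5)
Your proposal is correct and follows essentially the same route as the paper's proof: set up the $(X,\sel)$ Markov chain, invoke irreducibility/aperiodicity and the Perron--Frobenius spectral-gap bound to get $O(|\lambda_2|^r)$ convergence, marginalise over $\sel$ to obtain $\inb_j$, and sum over the $\binom{n-1}{s-1}$ buffer sets containing $B$. Your treatment of the emission step via a conditional function $g_B(X)$ and your explicit attention to the constant $C$ and to aperiodicity are in fact slightly more careful than the paper's, but they do not change the argument.
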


\begin{proof}
Let $P$ be the transition matrix for this Markov chain. Then
$P$ is stochastic, irreducible and aperiodic.
Let $\pi_j$ be the probability distribution after $j$ steps. Then
$\pi_j=\pi_0\cdot P^j$.
Let $\lambda_1$ and $\lambda_2$ be the largest and second-largest
eigenvalues of $P$. 
Overwhelmingly likely, the enginvalues are all distinct.
Since $P$ is stochastic, $\lambda_1=1$ and $|\lambda_2|<1$.
Applying the
Perron-Frobenius Theorem similarly to Backaker \cite[Theorem 3 and
Example 1]{backaaker2012google}, we
have $P^r=P^\infty+O(|\lambda_2|^r)$, and therefore 
$\pi_j-\pi_{j'}=\pi_0\cdot C'|\lambda_2|^r$ for $j,j'>r$ for some
constant $C'$.

Let $\pi(X,\sel)$ be the probability that the state is $(X,\sel)$ in
the probability distribution $\pi$.
Then $\inb_j(X)=\sum_{\sel}\pi_j(X,\sel)$.
Let $m_{j,j'}=\max_X(|\inb_j(X)-\inb_{j'}(X)|)$, where 
$X$ ranges over sets of block ids of size  $s$. 
Then, for $j,j'>r$,
\[\begin{array}{rcl}
m_{j,j'}&=&\max_X(\sum_{\sel}|\pi_j(X,\sel)-\pi_{j'}(X,\sel)|)\\
&=& \max_X(\sum_{\sel}(\pi_0(X,\sel)\cdot C'|\lambda_2|^r))\\
&=&  C'|\lambda_2|^r \cdot \max_X(\sum_{\sel}\pi_0(X,\sel)).
\end{array}\]
  
Notice that $\out_j(B)=\sum_{X\mid B\in X}\inb_j(X)$.
Therefore, 
\[\begin{array}{rcl}
|\out_j(B)-\out_{j'}(B)|
&=& \displaystyle\sum_{X\mid B\in X}|\inb_j(X)-\inb_{j'}(X)|\\
 &\leq& \displaystyle\binom{n-1}{s-1} \cdot m_{j,j'}\\
 &\leq&  \displaystyle\binom{n-1}{s-1} \cdot C \cdot |\lambda_2|^{r}\\
  \end{array}
  \]
where $C=C'\cdot \max_X(\sum_{\sel}\pi_0(X,\sel))$.
\end{proof}

\begin{theorem} \schemeacr{} has content and access-pattern privacy, i.e. the advantage of any multiple query adversary against the privacy of \schemeacr{} is negligible in the security parameter $\lambda$ and the number of OC rounds $r$. 
\end{theorem}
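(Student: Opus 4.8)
The plan is to proceed through a short sequence of game hops that successively eliminate every channel through which the adversary could learn $b$, charging each hop either to the IND-CPA security of the symmetric cipher (a term negligible in $\lambda$) or to the mixing estimates of Lemmas~\ref{lemma} and~\ref{lem:out} (terms negligible in $r$). The starting observation is that, beyond the public initial layout produced by $\INITSTORE$ and $\INITOC$, the adversary's entire view is the list of access patterns, i.e.\ for every call the pair of positions that are read and then written back. In $\ACCESSRW$ this pair is $(req\_p,cpy\_p)$ \emph{irrespective} of whether $op_b=\mathit{read}$ or $op_b=\mathit{write}$, since line~\ref{a1wrt} always rewrites both positions; hence operation-privacy is immediate and it only remains to show that (a) the ciphertexts written back leak nothing about the $d_{b,j}$, and (b) the distribution of the pairs $(req\_p,cpy\_p)$ over all $q$ challenge calls is statistically almost independent of the challenge block ids $B_{b,j}$.

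\textbf{Hop 1 (encryption).} Replace every ciphertext ever stored --- during $\INITSTORE$, $\INITOC$, each $\ACCESSOC$ round and each $\ACCESSRW$ call --- by the encryption of a fixed all-zero string of the same block length. A standard hybrid over the polynomially many encryptions bounds the change in the adversary's view by $\mathrm{poly}(\lambda)$ times the IND-CPA advantage against the cipher, negligible in $\lambda$; afterwards the block contents written back are independent of the $d_{b,j}$ and of the $B_{b,j}$. \textbf{Hop 2 (layout freshness).} Condition on the event that before each challenge call every store position has been overwritten by some $\ACCESSOC$ round; by Lemma~\ref{lemma} this fails with probability at most $p_{N,r}$, negligible in $r$, and on this event no position still carries metadata placed by $\INITSTORE$.

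\textbf{Hop 3 (map decorrelation), the key step.} Invoke Lemma~\ref{lem:out}: after the $r$ obfuscation rounds preceding the $j$-th challenge, the distribution $\out_j(\cdot)$ of the block the OC writes into any position is within $\binom{n-1}{s-1}\,C\,|\lambda_2|^{r}$ of its stationary value, uniformly over the past history. Propagating this through the $\mathsf{Sync}$ updates, the distribution of $map_c[B_{b,j}].psns$ --- and therefore of the uniformly sampled $req\_p$, together with the $cpy\_p$ chosen independently of the challenge block --- lies within a negligible-in-$r$ statistical distance of a distribution that is invariant under relabelling the block ids, hence does not depend on which block $B_{b,j}$ denotes. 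Replace it by that $b$-independent distribution. After Hops~1--3 the challenger's entire output is, up to negligible statistical distance, a function of quantities independent of $b$, so ${\Exp}_{m,\adv}^{0}(\schemesymb)$ and ${\Exp}_{m,\adv}^{1}(\schemesymb)$ are indistinguishable; summing the per-query losses gives a total advantage of order $\mathrm{poly}(\lambda)\cdot(\text{IND-CPA advantage})+q\bigl(p_{N,r}+\binom{n-1}{s-1}C|\lambda_2|^{r}\bigr)$, negligible in both $\lambda$ and $r$.

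The step I expect to be the main obstacle is Hop~3: turning the one-position statement of Lemma~\ref{lem:out} into a statement about the \emph{entire} client map, which is assembled adaptively by $\mathsf{Sync}$ from observed timestamps and $cns$ counters, and arguing that an adversary that picks $B_{0,j},B_{1,j}$ adaptively from the access patterns of earlier queries gains nothing. This needs a careful query-by-query hybrid together with a symmetry (relabelling) argument showing the obfuscated store--map pair is exchangeable in the block ids.
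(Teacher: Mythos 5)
Your proposal is correct and follows essentially the same route as the paper: an IND-CPA hop for the written data, Lemma~\ref{lemma} to bound the probability that some position escapes being overwritten by the OC, Lemma~\ref{lem:out} to bound the residual advantage in distinguishing which block was accessed, and the observation that read and write induce identical access patterns to handle the operation bit. The difficulty you flag in Hop~3 --- lifting the single-position statement of Lemma~\ref{lem:out} to the adaptively assembled client map --- is real, and the paper's own proof passes over it with the same one-line bound rather than the careful hybrid you describe.
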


\begin{proof} 
	We bound the distinguishing advantage of any \schemeacr{} adversary through the following sequence of ``game hops''.
	
	\smallskip \emph{Game 1.} This is the legitimate ${\Exp}_{m,\adv}^{0}(\schemesymb)$ experiment. In particular, in the $j$-th challenge query $\cdv$ returns $\ACCESSRW(B_{0,j}, op_{0,j},\\ d_{0,j}, S, k)$, for $j=1,\ldots,q$.
	
	We define the following series of experiments for $j=1,\ldots,q$. 
	
	\smallskip \emph{Game ${2,j}$.} Let $j\in\{1,\ldots,q\}$ be fixed.  In this expe\-ri\-ment, the queries $i=1,\ldots,j$ are answered with $\ACCESSRW(B_{0,i}, op_{0,i}, d_{1,i}, S, k)$, namely
	by performing operation $op_{0,i}$ on block $B_{0,i}$ with data $d_{1,i}$ (instead of data $d_{0,i}$). The remaining queries until the $q$-th query are responded normally, i.e. with $\ACCESSRW(B_{0,i}, op_{0,i}, d_{0,i}, S, k)$ for $i=j+1,\ldots,q$.
	
	Let us call this modified experiment ${\Exp}_{m,\adv}^{2,j}(\schemesymb)$. 
	One can easily see that  $$\displaystyle\Big\lvert\prob{{\Exp}_{m,\adv}^{2,j-1}(\schemesymb)=1}- \prob{{\Exp}_{m,\adv}^{2,j}(\schemesymb)=1}\Big\rvert$$
	for $j=1,\ldots,q$ is upper-bounded by the distinguishing advantage against the semantic
	security of the encryption scheme, which is negligible \cite{indcpa}. Trivially ${\Exp}_{m,\adv}^{2,0}(\schemesymb)={\Exp}_{m,\adv}^{0}(\schemesymb)$.
	
	Next, for $j=1,\ldots,q$ we define a new sequence of experiments.
	
	\smallskip \emph{Game ${3,j}$.} Fix $j$. In this experiment, the queries $i=1,\ldots,j$ are answered with $\ACCESSRW(B_{1,i}, op_{0,i}, d_{1,i}, S, k)$,
	i.e. by running operation $op_{0,i}$ on block $B_{1,i}$ with data $d_{1,i}$ (instead of running it on block $B_{0,i}$). The remaining queries until the $q$-th query remain unchanged, i.e. they return $\ACCESSRW(B_{0,i}, op_{0,i},\allowbreak d_{1,i}, S, k)$ for $i=j+1,\ldots,q$. Let us call resulting experiment ${\Exp}_{m,\adv}^{3,j}(\schemesymb)$.
	
	We proceed to upper bound the probability
	that an adversary has in distinguishing access to two different
	blocks $B_{2,j}$ and $B_{1,j}$ in experiment ${\Exp}_{m,\adv}^{3,j}(\schemesymb)$.
	
	Consider the $r$ runs of OC. Let $r_1=r/2$; we will distinguish
	between the first set of $r_1$ runs, and the remaining $r-r_1$ runs.
	Now we distinguish between
	two situations that arise after the runs:
	\setlist[description]{font=\normalfont\itshape}
	\begin{description}
		\item[State A] The adversary has observed that all the positions in
		the store got overwritten by a block id coming out of OC during the
		second set of $r-r_1$ runs.
		\item[State B] The adversary has observed that not all the positions
		got overwritten; that is, at least one position survived being
		overwritten during the second set of $r-r_1$ runs.
	\end{description}
	When a store position is overwritten at step $j$ by a block coming from OC, the
	adversary's probability distribution of what block is in that store
	position is $\out_j(\cdot)$.
	If State A is observed, then the adversary's probability of
	distinguishing the block read in ACCESSRW is at most
	\[\max_{B,i,i'}|\prob{S_i\mbox{ contains }B}-\prob{S_{i'}\mbox{ contains }B}|\]
	which is at most
	$\max_{r_1<j,j'<r\atop B}|\out_j(B)-\out_{j'}(B)|$.
	By Lemma~\ref{lem:out}, this value is at most $\binom{n-1}{s-1}\cdot
	S^r$.
	
	The probability that state $B$ is observed is 
	$p_{N,r_1}$ (this notation is defined in Lemma \ref{lemma}).
	
	The probability of distinguishing is
	the probability of arriving in state A times the probability of
	distinguishing in state A, plus 
	the probability of arriving in state B times the probability of
	distinguishing in state B.
	This is at most:
	
	\begin{equation}
	(1-p_{N,r})\times \binom{n-1}{s-1}\cdot
	S^r+ p_{N,r} \times 1
	\label{eq:adv}
	\end{equation}
	
	\noindent Since both $S^r$ and $p_{N,r}$ are negligible
	in $r$, the probability and therefore the advantage of the adversary is negligible in $r$.
	
	Let us define the last series of game hops, again for $j=1,\ldots,q$.
	
	\smallskip \emph{Game ${4,j}$.}
	In this experiment, the queries $i=1,\ldots,j$ are answered with $\ACCESSRW(B_{1,i}, op_{1,i}, d_{1,i}, S, k)$, i.e.
	by running operation $op_{1,i}$ on block $B_{1,i}$ with data $d_{1,i}$ (instead of running operation $op_{0,i}$). The remaining queries until the $q$-th query remain unchanged, i.e. they return $\ACCESSRW(B_{1,i}, op_{0,i}, d_{1,i}, S, k)$ for $i=j+1,\ldots,q$. Let us call resulting experiment ${\Exp}_{m,\adv}^{4,j}(\schemesymb)$.
	Actually, it holds that $$\prob{{\Exp}_{m,\adv}^{4,j-1}(\schemesymb)=1}=\prob{{\Exp}_{m,\adv}^{4,j}(\schemesymb)=1}$$ by construction. 
	This is because $\ACCESSRW$ performs the same (read position followed by write position) both for read and write operations. It is easy to see that ${\Exp}_{m,\adv}^{4,q}(\schemesymb)={\Exp}_{m,\adv}^{1}(\schemesymb)$.
	
	Finally, by adding the probabilities obtained in each game hop, the statement
	of the theorem follows.
\end{proof}

\section{Invariants}
\label{app:invariants}

We  prove the
following invariant: \emph{for every client and every block, the client has
	a valid position for the block in its map} (INV-2 below).
In the following we will use $block[p]$ notation to denote the block contained at position $p$ in the server store, and $map_\mathcal{C}[B]$ to denote the entry for block id $B$ in the map of client $\mathcal{C}$. 

We start with a simpler invariant, which is a useful lemma.

\paragraph{INV-1.} The variable $block[p].cns$ on the server is at most equal to the number of clients that know that that block is at position $p$. More precisely: for all positions $p$,
\begin{multline}
block[p].cns =\\
\Big|\{\mathcal{C} \in Clients\mid p \in map_\mathcal{C}[block[p].bid].psns\\
{}\land map_\mathcal{C}[block[p].bid].ts = block[p].ts\Big|
\end{multline}

The code maintains this invariant by linking any change to the local map with an operation on the \textit{block[p].cns} value. These changes happen when the contents of the block is updated or deleted (by marking the block as \textit{free}). The SyncPositons operation (see Algorithm \ref{algo:csw-sync}) updates a client's local map to reflect changes performed by other clients and frees old data. The \textit{WriteBlock}
and \textit{DuplicateBlock} set the \textit{block[p].cns} value to 1 in order to trigger map changes in other clients.

As a corollary, we have the following:

\paragraph{INV-1$'$.} 
When $block[p].cns$ has the maxim value ($|Clients|$), then every client's local map contains the latest information about position $p$. More precisely, for each $p$:
\begin{multline}
block[p].cns=|Clients|  \Rightarrow\\
\forall \mathcal{C} \in clients,~p \in map_\mathcal{C}[block[p].bid].psns~\land\\
\land~map_\mathcal{C}[block[p].bid].ts = block[p].ts
\end{multline}

\paragraph{INV-2.} For each block, a valid position is always known to all \schemeacr{} clients.

More precisely, for any block id $B$, there is a position $p$
such that 
\[block[p].bid=B\quad\land\quad p\in \bigcap_{\mathcal{C}\in
	Clients}map_\mathcal{C}[B].psns \]

Before we prove this invariant, we provide some intuition. To maintain INV-2, we use the $map[B].\vf$ set stored in the client's map. This set tracks which are the positions $p$ of a block that a client has observed to have a maximum value for $block[p].cns$. In order to prevent data loss, we require that (1) at any time each client can only reassign a single position, and (2) that at least one position still remains if all the clients decide to reassign one position.

The second part of the requirement (2) is easily achieved by checking that the size of $map[B].\vf$ is bigger than the number of clients. We address the first requirement (1) by requiring each client mark the $map[B].\vf$ set as empty whenever they reassign a position from it. This will prevent the client to reassign any consolidated positions until it re-learns their location.

\begin{proof}
We prove it for the case that there are two clients, say $\mathcal{C}$ and $\mathcal{D}$; as will be seen, the proof generalises intuitively to more clients. Suppose INV-2 is not an invariant; then there is a transition from a state $st_3$ in which INV-2 holds for a block id $B$, to a state $st_4$ in which it does not hold for $B$. Suppose client $\mathcal{C}$ reallocates the crucial position $p$ in $st_3$ which is lost in $st_4$. Since $st_3$ satisfies INV-2, in $st_3$ $p\in map_\mathcal{C}[B].psns$ and $p\in map_\mathcal{D}[B].psns$, and since $st_4$ does not satisfy INV-2, in $st_4$ $map_\mathcal{C}[B].psns \cap map_\mathcal{D}[B].psns=\emptyset$. Using Algorithms \ref{algo:csw-sync} and \ref{algo:csw-duplicate}, we see that the transition for $\mathcal{C}$ from $st_3$ to $st_4$ required $|map_\mathcal{C}[B].\vf|>2$ in $st_3$, so suppose that $map_\mathcal{C}[B].\vf\supseteq\{p,q,r\}$ in $st_3$. Since $\vf\subseteq psns$ (Algorithms \ref{algo:csw-sync} and \ref{algo:csw-duplicate}), we have $\{p,q,r\}\subseteq map_\mathcal{C}[B].psns$ in $st_3$.

Let $st_0$ be the state immediately after the previous reallocation by $\mathcal{C}$. Then $map_\mathcal{C}[B].\vf=\emptyset$ in $st_0$. Since $q,r\in map_\mathcal{C}[B].\vf$ in $st_3$, there was a state between $st_0$ and $st_3$ in which $q.cns=2$, and one in which $r.cns=2$. Let $st_1$ be the most recent of those states. Either $q$ or $r$ was reallocated between $st_1$ and $st_3$, and by definition of $st_0$, that reallocation was done by $\mathcal{D}$. Consider the
most recent reallocation by $\mathcal{D}$ done before $st_3$, say in a state $st_2$. Now we have states $st_0,st_1,st_2,st_3,st_4$ in temporal order. Based on Agorithms \ref{algo:csw-sync} and \ref{algo:csw-duplicate}, $|map_\mathcal{C}[B].\vf|>2$ in $st_2$, so say $map_\mathcal{D}[B].\vf\supseteq\{t_1,t_2,t_3\}$. Then, in $st_2$ we have: $block[t_1].cns=block[t_2].cns=block[t_3].cns=2$, and by INV-1$'$, $\{t_1, t_2, t_3\}\subseteq map_\mathcal{C}[B].psns \cap map_\mathcal{D}[B].psns$. In $st_3$, $\mathcal{D}$'s transition has removed one element from $map_\mathcal{D}[B].psns$, and in $st_4$, $\mathcal{C}$'s transition has removed one element from $map_\mathcal{C}[B].psns$. Therefore, in $st_4$, $map_\mathcal{C}[B].psns \cap map_\mathcal{D}[B].psns$ is non-empty, contradicting our hypothesis.
\hfill
\end{proof}

\section{Algorithms}
\label{app:algorithms}

\removelatexerror
\begin{algorithm2e}[H]
\footnotesize
\SetKwProg{function}{function}{}{}

\KwIn{block id, block, data}
\KwOut{block, status}

\SetKwProg{fn}{function}{}{}

\BlankLine
\function{PrepareWrite (bid, block, data)}{
	\If {$(block.bid = bid$ \textbf{or} $block.bid = free)$}{
		$block.bid \gets bid$\;
		$block.data \gets data$\;
		$block.cns \gets 1$\;
		$block.ts \gets current\_time$\;
		$status \gets block.data$\;
	}
	\Else {
		$status \gets null$\;
	}
	\Return $(block,status)$\;
}

\caption{\small Write a block to the store.}
\label{algo:csw-write}
\end{algorithm2e}

\removelatexerror
\begin{algorithm2e}[H]
\footnotesize
\SetKwProg{function}{function}{}{}

\KwIn{source block, destination block, destination position}
\KwOut{destination block, client map}

\SetKwProg{fn}{function}{}{}

\BlankLine
\function{DuplicateBlock (sblock, dblock, p, $map_c$)}{
	\If 
	{$((dblock.cns=\vert clients\vert$ \textbf{and} $\vert map_c[dblock.bid].vf\vert>\vert clients\vert)$ \textbf{or}\\
	\hspace{3pt} $(dblock.cns=1$ \textbf{and} $\vert map_c[dblock.bid].psns\vert>1))$\\
	}{
		$dblock.bid \gets sblock.bid$\;
		$dblock.data \gets sblock.data$\;
		$dblock.ts \gets sblock.ts$\;
		$dblock.cns \gets 1$\;
		$\mathit{clear}~map_c[dblock.bid].vf$\;
		$\mathit{move}~p~\mathit{from}~map_c[dblock.bid].psns~\mathit{to}~$ $map_c[sblock.bid].psns$\;
	}	
	\Return $(dblock,map_c)$\;
}

\caption{\small Duplicate a store block to a new position.}
\label{algo:csw-duplicate}
\end{algorithm2e}

\removelatexerror
\begin{algorithm2e}[H]
\footnotesize
\SetKwProg{function}{function}{}{}

\KwIn{block, position, buffer, obfuscation client map}
\KwOut{block, buffer}

\SetKwProg{fn}{function}{}{}

\BlankLine
\function{UpdateBuffer($blk$, $p$, $\mathit{buffer}$, $map_{oc}$)}{

	\If {($\mathit{buffer}$ \textbf{not} $full$)}{
		add $blk$ to $\mathit{buffer}$\;
	}

	\If {($\mathit{buffer}$ \textbf{is} $full$)}{
		$buf\_blk \xleftarrow{R} \mathit{buffer}$\;
		$(blk, map_{oc}) \gets DuplicateBlock (buf\_blk, blk, p, map_{oc})$\;
		\If {$(buf\_blk = blk)$}{
			remove $buf\_blk$ from $\mathit{buffer}$\;
		}
	}
	\Return (blk,buffer)\;
}

\caption{\small Update the local buffer data structure.}
\label{algo:csw-oc-update}
\end{algorithm2e}

\end{document}